\documentclass[11pt]{article}

\usepackage{latexsym}
\usepackage{multirow}
\usepackage{amsmath}
\usepackage{amssymb}
\usepackage{amsfonts,amsthm}
\usepackage{adjustbox}
\usepackage{mathcomp}

\usepackage[top=1in, bottom=1in, left=1in, right=1in]{geometry}

\usepackage{tabu}
\usepackage{longtable}

\usepackage{tikz}
\usetikzlibrary{arrows,decorations.pathmorphing,decorations.shapes,backgrounds,fit}
\pgfkeys{tikz/cc1/.style={dotted} }
\pgfkeys{tikz/cc4/.style={dashed} }
\pgfkeys{tikz/cc3/.style={decorate,decoration=bumps} }
\pgfkeys{tikz/cc2/.style={decorate,decoration=snake} }
\pgfkeys{tikz/solid/.style={line width=2pt} }
\pgfkeys{tikz/solid2/.style={line width=1pt} }
\pgfkeys{tikz/nodostile1/.style={draw,line width=1pt}}
\pgfkeys{tikz/nodostile2/.style={draw,line width=2pt}}
\pgfkeys{tikz/arcostile1/.style={very thick} }
\pgfkeys{tikz/arcostile2/.style={red,very thick} }
\pgfkeys{tikz/arcostile3/.style={thick} }
\pgfkeys{tikz/arcostile4/.style={line width=2pt} }

\newcommand{\ignore}[1]{}

\newtheorem{theorem}{Theorem}[section]
\newtheorem{lemma}[theorem]{Lemma}

\newtheorem{corollary}[theorem]{Corollary}

\usepackage{dcolumn}
\newcolumntype{d}[1]{D{.}{.}{#1}}

\usepackage{array}
\newcolumntype{C}[1]{>{\centering\let\newline\\\arraybackslash\hspace{0pt}}m{#1}}

\newcommand{\sidenote}[1]{\textbf{(*)}\marginpar {\tiny \raggedright{(*) #1}}}

\begin{document}
\title{\bf Approximating the Smallest Spanning Subgraph for 2-Edge-Connectivity in Directed Graphs\thanks{
A preliminary version of some of the results of this work was presented at ESA 2015.}}
\author{
Loukas Georgiadis\thanks{
University of Ioannina, Greece. E-mails: \texttt{\{loukas,charis,nparotsi\}@cs.uoi.gr}.}
\and
Giuseppe F. Italiano\thanks{
Universit\`a di Roma ``Tor Vergata'', Italy. E-mail: \texttt{giuseppe.italiano@uniroma2.it}.
Partially supported by MIUR
under Project AMANDA.
}
\and
Charis Papadopoulos$^\dagger$
\and
Nikos Parotsidis$^\dagger$
}

\date{\today}

\maketitle

\begin{abstract}
Let $G$ be a strongly connected directed graph. We consider the following three problems, where we wish to compute the smallest strongly connected spanning subgraph of $G$ that maintains respectively: the $2$-edge-connected blocks of $G$ (\textsf{2EC-B});
the $2$-edge-connected components of $G$ (\textsf{2EC-C});
both the $2$-edge-connected blocks and the $2$-edge-connected components of $G$ (\textsf{2EC-B-C}). All three problems are NP-hard, and thus we are interested in efficient approximation algorithms.
For \textsf{2EC-C} we can obtain a $3/2$-approximation by combining previously known results.
For \textsf{2EC-B} and \textsf{2EC-B-C}, we present new $4$-approximation algorithms that run in linear time.
We also propose various heuristics to improve the size of the computed subgraphs in practice, and conduct a thorough experimental study to assess their merits in practical scenarios.
\end{abstract}

\section{Introduction}
\label{sec:introduction}

Let $G=(V,E)$ be a directed graph (digraph), with $m$ edges and $n$ vertices.
An edge of $G$ is a \emph{strong bridge} if its removal increases the number of strongly connected components of $G$.
A digraph $G$ is $2$-edge-connected if it has no strong bridges. The $2$-edge-connected components of $G$ are its maximal $2$-edge-connected subgraphs.
Let $v$ and $w$ be two distinct vertices: $v$ and $w$ are \emph{$2$-edge-connected}, denoted by  $v \leftrightarrow_{\mathrm{2e}} w$, if there are two edge-disjoint directed paths from $v$ to $w$  and two edge-disjoint directed paths from $w$ to $v$. (Note that a path from $v$ to $w$ and a path from $w$ to $v$ need not be edge-disjoint.)
A \emph{$2$-edge-connected block} of $G=(V,E)$ is a maximal subset $B \subseteq V$ such that $u \leftrightarrow_{\mathrm{2e}} v$ for all $u, v \in B$.
Differently from undirected graphs, in digraphs $2$-edge-connected blocks can be different from the  $2$-edge-connected components, i.e., two vertices may be $2$-edge-connected but lie in different $2$-edge-connected components. See Figure \ref{fig:2ebc}.

A \emph{spanning subgraph} $G'$ of $G$ has the same vertices as $G$ and contains a subset of the edges of $G$.
Computing a smallest
spanning subgraph (i.e., one with minimum number of edges)
that maintains the same edge or vertex connectivity properties of the original graph is a fundamental problem in network design, with many practical applications~\cite{connectivity:nagamochi-ibaraki}.
In this paper we consider the problem of finding the smallest spanning subgraph of $G$ that
maintains  certain $2$-edge-connectivity requirements in addition to strong connectivity.
Specifically, we distinguish three 
problems that we refer to as \textsf{2EC-B}, \textsf{2EC-C} and \textsf{2EC-B-C}. In particular, we wish to compute the smallest strongly connected spanning subgraph of a digraph $G$ that maintains the following properties:
\begin{itemize}
\item[(i)] the pairwise $2$-edge-connectivity of $G$, i.e., the $2$-edge-connected blocks of $G$ (\textsf{2EC-B});
\item[(ii)] the $2$-edge-connected components of $G$ (\textsf{2EC-C});
\item[(iii)] both the $2$-edge-connected blocks and the $2$-edge-connected components of $G$ (\textsf{2EC-B-C}).
\end{itemize}
Since all those problems are NP-hard~\cite{GJ:NP},  we are interested in designing efficient approximation algorithms.

While for \textsf{2EC-C} one can obtain a $3/2$-approximation using known results, for the other two problems no efficient approximation algorithms were previously known.
Here we present a linear-time algorithm for \textsf{2EC-B} that achieve an approximation ratio of $4$.
Then we extend this algorithm so that it approximates the smallest \textsf{2EC-B-C} also within a factor of $4$.
This algorithm runs in linear time if the $2$-edge-connected components of $G$ are known, otherwise it requires the computation of these components, which can be done in $O(n^2)$ time \cite{2CC:HenzingerKL14}.
Moreover, we give efficient implementations of our algorithms that run very fast in practice.
Then we consider various heuristics that improve the size of the computed subgraph in practice.
Some of these heuristics require $O(mn)$ time in the worst case, so we also consider various techniques that achieve significant speed up.

\begin{figure}[t]
\begin{center}
\includegraphics[trim={0 0 0 10cm}, clip=true, width=\textwidth]{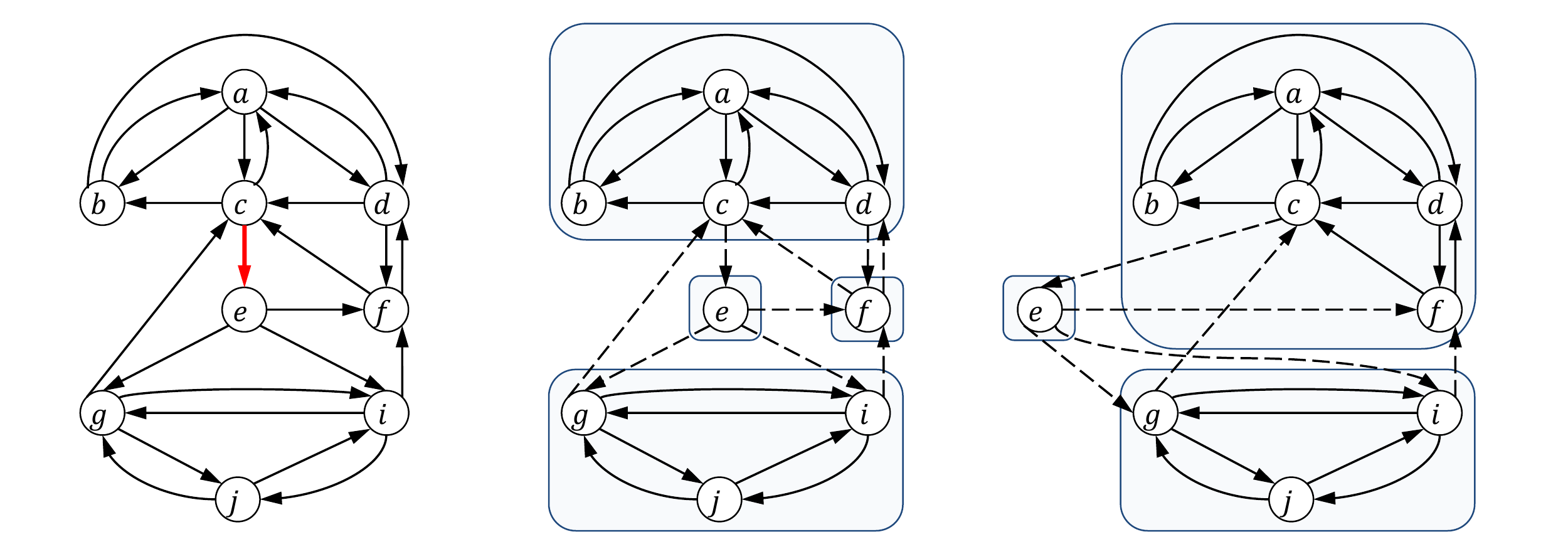}
\end{center}
\vspace{-0.2cm}
\hspace{2cm} (a) $G$ \hspace{3.2cm} (b) $2\mathit{ECC}(G)$ \hspace{3.2cm}  (c) $2\mathit{ECB}(G)$
\caption{(a) A strongly connected digraph $G$, with a strong bridge shown in red (better viewed in color).
(b) The $2$-edge-connected components and 
(c)
the $2$-edge-connected
blocks of $G$.}
\label{fig:2ebc}
\end{figure}

\subsection{Related work}
\label{sec:related-work}
Finding a smallest $k$-edge-connected (resp. $k$-vertex-connected) spanning subgraph of a given $k$-edge-connected (resp. $k$-vertex-connected) digraph is NP-hard for $k \ge 2$ for undirected graphs, and for $k \ge 1$ for digraphs~\cite{GJ:NP}.
More precisely, if the input graph consists of a single $2$-edge-connected block then the problem asks for the smallest $2$-edge-connected subgraph, whereas
if the input graph consists of $n$ singleton $2$-edge-connected blocks then the problem coincides with the smallest strongly connected spanning subgraph.
Problems of this type, together with more general variants of approximating minimum-cost subgraphs that satisfy certain connectivity requirements, have received a lot of attention, and several important results have been obtained. 
More general problems of approximating minimum-cost subgraphs that satisfy certain connectivity requirements has also received a lot of attention; see, e.g., the survey \cite{MCC:KZ:2007}.

Currently, the best approximation ratio for computing the smallest strongly connected spanning subgraph (\textsf{SCSS}) is $3/2$, achieved by Vetta with a polynomial-time algorithm~\cite{Vetta:MSCS:2001}. 
Although Vetta did not analyze exactly the running time of his algorithm, it needs to solve a maximum matching problem in a relaxation problem. 
A faster linear-time algorithm that achieves a $5/3$-approximation was given by Zhao et al.~\cite{ZNI:MSCS:2003}.
For the smallest $k$-edge-connected spanning subgraph (\textsf{kECSS}), Laehanukit et al.~\cite{LGS:MSCS:2012} gave a randomized $(1+1/k)$-approximation algorithm.
Regarding hardness of approximation, Gabow et al. \cite{GGTW:kECSS:2009} showed that there exists an
absolute constant $c>0$ such that for any integer $k \ge 1$, approximating the smallest \textsf{kECSS} on directed multigraphs to
within a factor $1 + c/k$ in polynomial time implies $\mathrm{P} = \mathrm{NP}$. Jaberi~\cite{2vcb:jaberi} considered various optimization problems related to \textsf{2EC-B} and proposed corresponding approximation algorithms. The approximation ratio in Jaberi's algorithms, however, is linear in the number of strong bridges, and hence $O(n)$ in the worst case.

\subsection{Our results}
\label{sec:our-results}
In this paper we provide both theoretical and experimental contributions to the \textsf{2EC-B}, \textsf{2EC-C} and \textsf{2EC-B-C} problems.
%
A $3/2$-approximation for \textsf{2EC-C} can be obtained by carefully combining the \textsf{2ECSS} randomized algorithm of Laehanukit et al.~\cite{LGS:MSCS:2012} and the \textsf{SCSS} algorithm of Vetta~\cite{Vetta:MSCS:2001}. A faster and deterministic 2-approximation algorithm for \textsf{2EC-C} can be obtained by combining techniques based on edge-disjoint spanning trees~\cite{edge-disjoint:edmonds,st:t} with the \textsf{SCSS} algorithm of  Zhao et al.~\cite{ZNI:MSCS:2003}.
We remark that the other two problems considered here, \textsf{2EC-B} and \textsf{2EC-B-C}, seem harder to approximate.
The only known result is the \emph{sparse certificate} for $2$-edge-connected blocks of \cite{2ECB}.
In this context, a sparse certificate $C(G)$ of a strongly connected digraph $G$ is a spanning subgraph of $G$ with $O(n)$ edges.
Such a sparse spanning subgraph implies a linear-time $O(1)$-approximation algorithm for \textsf{2EC-B}.
Unfortunately, no good bound for the approximation constant was previously known, and indeed achieving a small constant seemed to be non-trivial.
%
In this paper, we make a substantial progress in this direction
by presenting new 4-approximation algorithms for \textsf{2EC-B} and \textsf{2EC-B-C} that run in linear time (the algorithm for \textsf{2EC-B-C} runs in linear time once the $2$-edge-connected components of $G$ are available; if not, they can be computed in $O(n^2)$ time~\cite{2CC:HenzingerKL14}).

From the practical viewpoint, we provide efficient implementations of our algorithms that are very fast in practice.
%
%
We further propose and implement several heuristics that improve the size (i.e., the number of edges) of the computed spanning subgraphs in practice.
Some of our algorithms require $O(mn)$ time in the worst case, so we also present several techniques to achieve significant speedups in their running times.
With all these implementations, we conduct a thorough experimental study and report its main findings. We believe that this is crucial to assess the merits of all the algorithms considered in practical scenarios.

The remainder of this paper is organized as follows.
We introduce some preliminary definitions and graph-theoretical terminology in Section~\ref{sec:dominators}.
Then, in Section~\ref{sec:algorithms} we 
describe our basic approaches and provide a $3/2$-approximation algorithm for \textsf{2EC-C} and $4$-approximation algorithms for \textsf{2EC-B} and \textsf{2EC-B-C}.
Our empirical study is presented in Section~\ref{sec:experimental}.
Finally, in Section~\ref{sec:concluding} we discuss some open problems and directions for future work.

\section{Preliminaries}
\label{sec:dominators}


In this section, we introduce some basic terminology that will be useful throughout the paper.

\paragraph{Flow graphs, dominators, and independent spanning trees.}
A \emph{flow graph} is a digraph such that every vertex
is reachable from a distinguished start vertex. Let $G=(V,E)$ be a
strongly connected digraph. For any vertex $s \in V$, we denote by
$G(s)=(V,E,s)$ the corresponding flow graph with start vertex $s$;
all vertices in $V$ are reachable from $s$ since $G$ is strongly
connected. The \emph{dominator relation} in $G(s)$ is defined as
follows: A vertex $u$ is a \emph{dominator} of a vertex $w$ ($u$
\emph{dominates} $w$) if every path from $s$ to $w$ contains $u$;
$u$ is a \emph{proper dominator} of $w$ if $u$ dominates $w$ and
$u \not= w$.  The dominator relation is reflexive and transitive.
Its transitive reduction is a rooted tree, the \emph{dominator
tree} $D(s)$: $u$ dominates $w$ if and only if $u$ is an ancestor
of $w$ in $D(s)$. If $w \not= s$, $d(w)$, the parent of $w$ in
$D(s)$, is the \emph{immediate dominator} of $w$: it is the unique
proper dominator of $w$ that is dominated by all proper dominators
of $w$. The dominator tree of a flow graph can be computed in
linear time, see, e.g.,~\cite{domin:ahlt,dominators:bgkrtw}. An
edge $(u,w)$ is a \emph{bridge} in $G(s)$ if all paths from $s$ to
$w$ include $(u,w)$.\footnote{Throughout,
we use consistently the term  \emph{bridge} to refer to a bridge of a flow graph $G(s)$ and the term \emph{strong bridge} to refer to a strong bridge in the original digraph $G$.}
Italiano et al. \cite{Italiano2012} showed that
%
the strong bridges of $G$
can be computed from the bridges of the flow graphs $G(s)$ and $G^R(s)$, where $s$ is an arbitrary start vertex and $G^R$ is the digraph that results from $G$ after reversing edge directions.

A spanning tree $T$ of a flow graph $G(s)$ is a tree with root $s$ that contains a path from $s$ to $v$ for all vertices $v$.
Two spanning trees $B$ and $R$ rooted at $s$ are \emph{edge-disjoint} if they have no edge in common. A flow graph $G(s)$ has two such spanning trees if and only if it has no bridges~\cite{st:t}. The two spanning trees are \emph{maximally edge-disjoint} if the only edges they have in common are the bridges of $G(s)$. Two (maximally) edge-disjoint spanning trees can be computed in linear-time by an algorithm of Tarjan~\cite{st:t}, using the disjoint set union data structure of Gabow and Tarjan~\cite{dsu:gt}.
Two spanning trees $B$ and $R$ rooted at $s$ are \emph{independent} if for all vertices $v$, the paths from $s$ to $v$ in $B$ and $R$ share only the dominators of $v$. Every flow graph $G(s)$ has two such spanning trees, computable in linear time \cite{domv:gt05,domcert} which are maximally edge-disjoint.


\begin{figure}[t]
\begin{center}
\includegraphics[trim={0 0 0 10cm}, clip=true, width=\textwidth]{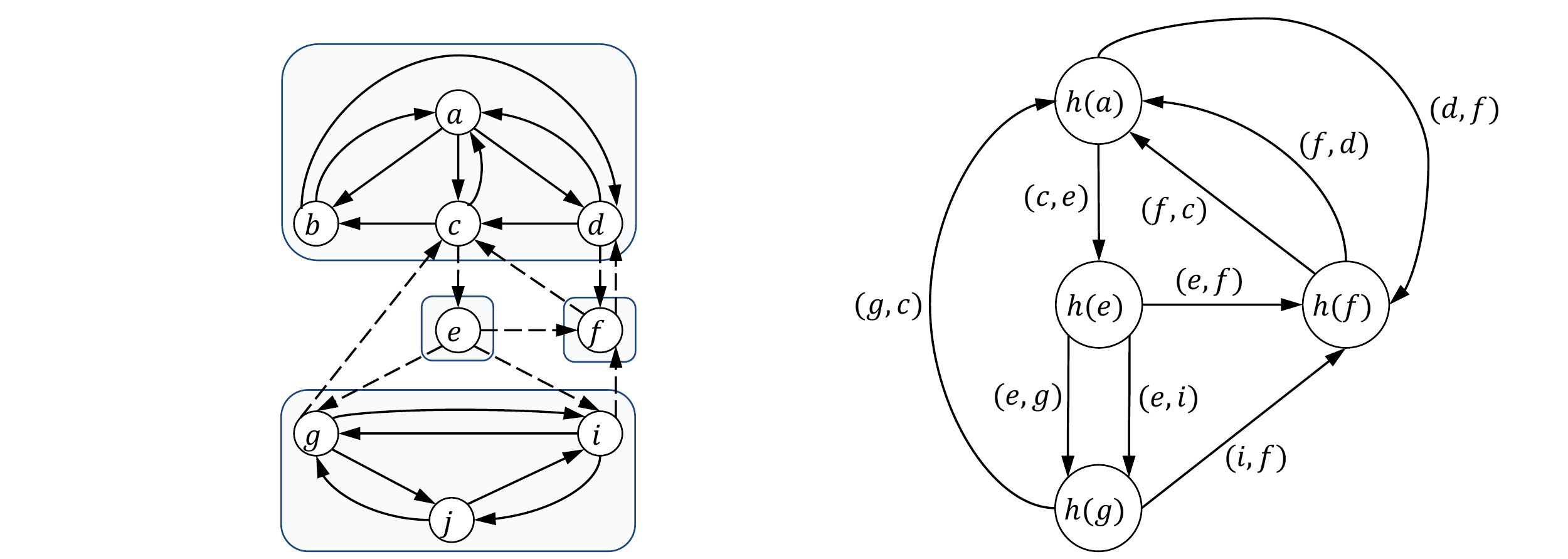}
\end{center}
\vspace*{-0.1in}
\caption{The right part shows the condensed graph $H$ of the digraph of Figure \ref{fig:2ebc}. Each edge of $H$ is labeled with the corresponding original edge of $G$.}
\label{fig:condensed}
\end{figure}

\paragraph{Condensed graph.} The \emph{condensed graph} is the digraph $H$ obtained from $G$ by contracting each $2$-edge-connected component of $G$ into a single supervertex.
Note that $H$ is a multigraph since the contractions can create loops and parallel edges; see Figure \ref{fig:condensed}.
For any vertex $v$ of $G$, we denote by $h(v)$ the supervertex of $H$ that contains $v$.
Every edge $(h(u),h(v))$ of $H$ is associated with the corresponding original edge $(u,v)$ of $G$.
Given a condensed graph $H$, we can obtain the \emph{expanded graph} by reversing the contractions; each supervertex $h(v)$ is replaced by the subgraph induced by the original vertices $u$ with $h(u)=h(v)$, and each edge $(h(u),h(v))$ of $H$ is replaced with the corresponding original edge $(u,v)$.

\section{Approximation algorithms and heuristics}
\label{sec:algorithms}

We start by describing our main approaches for solving problem
\textsf{2EC-B}. Let $G=(V,E)$ be the input directed graph.
The first two algorithms
process one edge $(x,y)$ of the current subgraph $G'$ of $G$ at a time, and test if it is safe to remove $(x,y)$. Initially $G'=G$, and the order in which the edges are processed is arbitrary. The third algorithm 
starts with the empty graph $G'=(V,\emptyset)$, and adds the edges of spanning trees of certain subgraphs of $G$ until the resulting digraph is strongly connected and has the same $2$-edge-connected blocks as $G$.

\begin{description}
  \item[Two Edge-Disjoint Paths Test.]
    We test if $G' \setminus (x,y)$ contains two edge-disjoint paths from $x$ to $y$.
    If this is the case, then we remove edge $(x,y)$.
    This test takes $O(m)$ time per edge, so the total running time is $O(m^2)$.
    We refer to this algorithm as \textsf{Test2EDP-B}.
    Note that \textsf{Test2EDP-B} computes a minimal $2$-approximate solution for the \textsf{2ECSS} problem~\cite{CT00}, which is not necessarily minimal for the \textsf{2EC-B} problem.

  \item[$2$-Edge-Connected Blocks Test.]
    If $(x,y)$ is not a strong bridge in $G'$, we test if $G' \setminus (x,y)$ has the same $2$-edge-connected blocks as $G'$. If this is the case then we remove edge $(x,y)$.
    We refer to this algorithm as \textsf{Test2ECB-B}.
    Since the $2$-edge-connected blocks of a graph can be computed in linear time~\cite{2ECB}, \textsf{Test2ECB-B} runs in $O(m^2)$ time.
    \textsf{Test2ECB-B} computes a minimal solution for \textsf{2EC-B} and achieves an approximation ratio of $4$ (see Section~\ref{section:IST-modified}).

  \item[Independent Spanning Trees.]
    We can compute a sparse certificate for $2$-edge-connected blocks as in \cite{2ECB}, based on a linear-time construction of two independent spanning trees of a flow graph \cite{domv:gt05,domcert}.
    We refer to this algorithm as \textsf{IST-B original}.
    We will show later that a suitably modified construction, which we refer to as \textsf{IST-B}, yields a linear-time $4$-approximation algorithm.
\end{description}
%
The first two approaches \textsf{Test2EDP-B} and \textsf{Test2ECB-B} can be combined into a hybrid algorithm (\textsf{Hybrid-B}),  as follows:
\begin{itemize}
\item if the tested edge $(x,y)$ connects vertices in the same $2$-edge-connected block
(i.e., $x\leftrightarrow_{\mathrm{2e}} y$), then apply \textsf{Test2EDP-B}; otherwise, apply  \textsf{Test2ECB-B}.
\end{itemize}
One can show that \textsf{Hybrid-B} returns the same sparse subgraph as \textsf{Test2ECB-B}.

\begin{lemma}
\label{lemma:Test2EDP-Test2ECB}
Let $(x,y)$ be an edge of $G$. Algorithm \textsf{Test2EDP-B} deletes $(x,y)$ only if \textsf{Test2ECB-B} does as well.
Moreover, if  $x$ and $y$ belong to the same $2$-edge-connected block of $G$, then algorithms \textsf{Test2EDP-B} and \textsf{Test2ECB-B} are equivalent for $(x,y)$,
i.e., edge $(x,y)$ is deleted by \textsf{Test2ECB-B} if and only if it is deleted by \textsf{Test2EDP-B}.
\end{lemma}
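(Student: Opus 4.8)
The plan is to prove the two assertions in turn, working with the characterization of when an edge can be safely deleted. For the first assertion, I would argue contrapositively: suppose \textsf{Test2ECB-B} does not delete $(x,y)$, i.e., $G' \setminus (x,y)$ either is not strongly connected (so $(x,y)$ is a strong bridge of $G'$) or has strictly fewer/different $2$-edge-connected blocks than $G'$. I want to show \textsf{Test2EDP-B} does not delete $(x,y)$ either, which means $G' \setminus (x,y)$ has no two edge-disjoint paths from $x$ to $y$. The key observation is that if $G' \setminus (x,y)$ contains two edge-disjoint $x \to y$ paths, then together with the edge $(x,y)$ itself, adding back $(x,y)$ creates three ``routes'' relating $x$ and $y$; I would show that in this situation deleting $(x,y)$ cannot change strong connectivity (any $s \to t$ path through $(x,y)$ can be rerouted through one of the two edge-disjoint paths) and cannot merge or split any $2$-edge-connected block. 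The latter is the crux: I need that the $2$-edge-connectivity relation $\leftrightarrow_{\mathrm{2e}}$ is preserved for every pair $u,v$, not just for the pair $x,y$. For a pair $u,v$ whose certifying edge-disjoint paths used $(x,y)$, I reroute through the two edge-disjoint $x\to y$ paths in $G'\setminus(x,y)$; some care is needed to keep the rerouted pair edge-disjoint, and here I would invoke a standard path-exchange/flow argument (Menger-type) rather than an ad hoc construction.

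For the second assertion, assume $x \leftrightarrow_{\mathrm{2e}} y$ in $G$ (equivalently, by the way the algorithms maintain blocks, in $G'$). The direction ``deleted by \textsf{Test2EDP-B} $\Rightarrow$ deleted by \textsf{Test2ECB-B}'' is already the first assertion. For the converse, suppose \textsf{Test2ECB-B} deletes $(x,y)$, so $G' \setminus (x,y)$ is strongly connected with the same blocks as $G'$; in particular $x$ and $y$ are still in a common $2$-edge-connected block of $G' \setminus (x,y)$, so by definition there are two edge-disjoint $x \to y$ paths (and two edge-disjoint $y \to x$ paths) in $G' \setminus (x,y)$. The existence of the two edge-disjoint $x \to y$ paths is exactly the condition checked by \textsf{Test2EDP-B}, so it deletes $(x,y)$ as well. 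This direction is essentially immediate from the definitions once one notes that $\leftrightarrow_{\mathrm{2e}}$ for the specific pair $(x,y)$ unpacks to the two-edge-disjoint-paths condition that the test uses.

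The main obstacle is the rerouting argument in the first assertion: showing that deleting $(x,y)$ preserves the $2$-edge-connected block of \emph{every} pair $u,v$, given only that $G'\setminus(x,y)$ has two edge-disjoint $x\to y$ paths. One has to handle the case where the two edge-disjoint $u\to v$ paths in $G'$ both use $(x,y)$, or where rerouting one of them through an $x\to y$ path in $G'\setminus(x,y)$ collides with the other. I expect the clean way to do this is to phrase it in terms of edge connectivity / flows: $\lambda_{G'}(u,v)\ge 2$ and the removed edge $(x,y)$ lies on a cycle with two internally-edge-disjoint $x\to y$ dipaths, so $(x,y)$ is not in any minimum $u$–$v$ edge cut of value $<2$; hence $\lambda_{G'\setminus(x,y)}(u,v)\ge 2$. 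Since moreover no block can \emph{grow}, equality of blocks follows. I would also remark that the claim that \textsf{Hybrid-B} returns the same subgraph as \textsf{Test2ECB-B}, mentioned just before the lemma, is an immediate corollary: on an edge inside a block the two sub-tests agree by the second assertion, and on an edge across blocks \textsf{Hybrid-B} runs \textsf{Test2ECB-B} by definition.
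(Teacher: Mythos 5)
Your proposal is correct and, once you settle on the Menger/min-cut formulation at the end of your first paragraph, it is essentially the paper's own argument: the paper proves the first claim by taking a minimum $u$--$w$ cut $(U,W)$ in $G\setminus(x,y)$, noting $x\in U$, $y\in W$, and using the two edge-disjoint $x\to y$ paths to force at least two cut edges; the second claim is handled exactly as you do, by unpacking $x\leftrightarrow_{\mathrm{2e}}y$ in $G\setminus(x,y)$ into the two-edge-disjoint-paths condition.
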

\begin{proof}
To prove the first part of the lemma, suppose that $(x,y)$ is deleted by \textsf{Test2EDP-B}. We show that the $2$-edge-connected blocks of $G$ are not affected by this deletion.
Consider any pair of $2$-edge-connected vertices $u$ and $w$ that was affected by the deletion of $(x,y)$, that is, the number of edge-disjoint paths from $u$ to $w$ was reduced.
Let $(U,W)$ be a minimum $u$-$w$ cut in $G\setminus (x,y)$, i.e., $U, W \subseteq V$, $U \cap W = \emptyset$, $u \in U$ and $w \in W$.
Then we also have $x \in U$ and $y \in W$. Since $G\setminus (x,y)$  has at least two edge-disjoint paths from $x$ to $y$, %
Menger's theorem implies that there are at least two edges directed from $U$ to $W$.
Thus, Menger's theorem implies $G\setminus (x,y)$  has at least two edge-disjoint paths from $u$ to $w$.

We now prove the second part of the lemma. Suppose that $x$ and $y$ lie in the same $2$-edge-connected block of $G$, and edge $(x,y)$ is deleted by algorithm \textsf{Test2ECB-B}.
This implies that $G\setminus (x,y)$ has two edge-disjoint paths from $x$ to $y$, so algorithm \textsf{Test2EDP-B} would also delete $(x,y)$.
\end{proof}

\subsection{Providing a sparse certificate as input}
\label{sec:sparse-input}

As we mentioned above, algorithm \textsf{IST-B} computes in linear time a sparse certificate for the $2$-edge-connected blocks of an input digraph $G$, i.e., a spanning subgraph of $G$ with $O(n)$ edges that has the same $2$-edge-connected blocks with $G$. In order to speed up our slower heuristics, \textsf{Test2EDP-B}, \textsf{Test2ECB-B} and \textsf{Hybrid-B}, we can apply them on the sparse certificate instead of the original digraph, thus reducing their running time from $O(m^2)$ to $O(n^2)$. Moreover, given that \textsf{IST-B} achieves a $4$-approximation (Theorem~\ref{theorem:ApproximationRatio}), it follows that \textsf{Test2EDP-B}, \textsf{Test2ECB-B} and \textsf{Hybrid-B}
produce a $4$-approximation for \textsf{2EC-B} in $O(n^2)$ time. Therefore, we applied this idea in all our implementations. See Table \ref{tab:algorithms} in Section \ref{sec:experimental}.
We also note that for the tested inputs, the quality of the computed solutions was not affected significantly
by the fact that we applied the heuristics on the sparse certificate computed by \textsf{IST-B} instead of the original digraph. Indeed, on average, the number of edges in the computed subgraph
was reduced by $6\%$ for \textsf{Test2EDP-B} and increased by less than $0.5 \%$  for \textsf{Hybrid-B}. The speed up gained, on the other hand, was by a factor slightly less than $5$ for \textsf{Test2EDP-B} and by a factor slightly larger than $2$ for \textsf{Hybrid-B}.

\ignore{
Note that \textsf{Test2EDP-B}, \textsf{Test2ECB-B} and \textsf{Hybrid-B} produce the same approximation ratio with \textsf{IST-B} for \textsf{2EC-B} in $O(n^2)$ time
if they are run on the sparse subgraph computed by \textsf{IST-B} instead of the original digraph.
Given that \textsf{IST-B} achieves a 4-approximation (Theorem~\ref{theorem:ApproximationRatio}), it follows that \textsf{Test2EDP-B}, \textsf{Test2ECB-B} and \textsf{Hybrid-B}
produce a $4$-approximation for \textsf{2EC-B} in $O(n^2)$ time.
Therefore, we applied this idea in all our implementations.
See Table \ref{tab:algorithms} in Section \ref{sec:experimental}. We note that for the tested inputs, the quality of the computed solutions was not affected significantly
by the fact that we applied the heuristics on the subgraph computed by \textsf{IST-B} instead of the original graph. Indeed, on average, the number of edges in the computed subgraph
was reduced by $6\%$ for \textsf{Test2EDP-B} and increased by less than $5 \tcperthousand$  for \textsf{Hybrid-B}, while the speedup attained was by a factor close to $5$ for \textsf{Test2EDP-B} and by a factor slightly larger than $2$ for \textsf{Hybrid-B}. \sidenote{LOUKAS: Added these averages.}
}

\subsection{Maintaining the $2$-edge-connected components}
\label{sec:2-edge-components}

Although all the above algorithms do not maintain the $2$-edge-connected components of the original graph, we can still apply them to get an approximation for \textsf{2EC-B-C}, as follows.
First, we compute the $2$-edge-connected components of $G$
and solve the \textsf{2ECSS} problem independently for each such component. Then, we can apply any of the algorithms for \textsf{2EC-B} (\textsf{Test2EDP-B}, \textsf{Test2ECB-B}, \textsf{Hybrid-B} or \textsf{IST-B}) for the edges that connect different components.
To speed them up,  we apply them to the condensed graph $H$ of $G$.  Let $H'$ be the subgraph of $H$ computed by any of the above heuristics, and let $G'$ be the expanded graph of $H'$, were we replace each supervertex of $H$ with the corresponding $2$-edge-connected sparse subgraph computed before.
We refer to the corresponding algorithms obtained this way as \textsf{Test2EDP-BC}, \textsf{Test2ECB-BC}, \textsf{Hybrid-BC} and \textsf{IST-BC}.
The next lemma shows that indeed $G'$ is a valid solution to the \textsf{2EC-B-C} problem.

\begin{lemma}
\label{lemma:condensed}
Digraph $G'$ is strongly connected and has the same $2$-edge-connected components and blocks as $G$.
\end{lemma}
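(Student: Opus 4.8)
The plan is to verify the three required properties of $G'$ in turn: strong connectivity, preservation of $2$-edge-connected components, and preservation of $2$-edge-connected blocks. Recall the construction: $H'$ is a spanning subgraph of the condensed graph $H$ computed by one of the \textsf{2EC-B} heuristics, so $H'$ is strongly connected and has the same $2$-edge-connected blocks as $H$; and $G'$ is obtained from $H'$ by expanding each supervertex $h(v)$ into a $2$-edge-connected spanning subgraph $S_{h(v)}$ of the corresponding $2$-edge-connected component of $G$ (computed by solving \textsf{2ECSS} on that component), with each edge of $H'$ replaced by its associated original edge of $G$.

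First I would argue strong connectivity. Since $H'$ is strongly connected and each $S_{h(v)}$ is strongly connected (being $2$-edge-connected), any two vertices $u,w$ of $G'$ are connected: route inside $S_{h(u)}$ to the tail of an outgoing edge, follow a path in $H'$ from $h(u)$ to $h(w)$ expanding each supervertex traversal by an internal path of the corresponding $S$, and finish inside $S_{h(w)}$. Next, the $2$-edge-connected components: each $S_{h(v)}$ is $2$-edge-connected by construction, so the vertices inside a single supervertex are pairwise $2$-edge-connected in $G'$ (all within one component of $G$). Conversely, no edge of $G'$ between two distinct supervertices can lie on a cycle that is $2$-edge-connected across supervertices, because in the condensed graph every such cross edge corresponds to a strong bridge structure — more precisely, two vertices in different $2$-edge-connected components of $G$ are never $2$-edge-connected even in $G$ itself, so they cannot be $2$-edge-connected in the subgraph $G'$; and two vertices in the same component of $G$ but placed in $G'$ so that they are $2$-edge-connected must, by the component-maximality, already be in one supervertex. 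Hence the $2$-edge-connected components of $G'$ are exactly $\{S_{h(v)} : h(v) \in H\}$, matching those of $G$.

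Finally, and this is the step I expect to be the main obstacle, I would show $G'$ has the same $2$-edge-connected blocks as $G$. The key reduction is that for vertices $u,w$ lying in different supervertices, $u \leftrightarrow_{\mathrm{2e}} w$ in $G'$ if and only if $h(u) \leftrightarrow_{\mathrm{2e}} h(w)$ in $H'$, and likewise $u \leftrightarrow_{\mathrm{2e}} w$ in $G$ iff $h(u) \leftrightarrow_{\mathrm{2e}} h(w)$ in $H$; combined with the fact that $H'$ and $H$ have the same $2$-edge-connected blocks, this settles the cross-supervertex pairs. The delicate point is justifying this lifting: two edge-disjoint paths between supervertices in $H'$ expand to two paths in $G'$ that remain edge-disjoint — this is immediate since distinct edges of $H'$ map to distinct edges of $G$ and the internal connecting paths can be chosen within the $2$-edge-connected $S_{h(v)}$'s without reusing edges (using that $S_{h(v)}$ has two edge-disjoint paths between any two of its vertices, in fact between any designated entry and exit vertices). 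For the converse direction one uses that contracting the strongly connected, indeed $2$-edge-connected, subgraphs $S_{h(v)}$ cannot merge two edge-disjoint paths into fewer. For pairs $u,w$ inside the same supervertex, both $G$ and $G'$ put them in the same block (they are even in the same $2$-edge-connected component), so there is nothing to check there beyond the component claim already established. Assembling these cases shows the block partitions of $G$ and $G'$ coincide.
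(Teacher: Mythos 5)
Your treatment of strong connectivity and of the $2$-edge-connected components is fine (the paper disposes of these in one line each, but your constructive routing argument is valid). The problem is in the step you yourself flag as delicate: lifting two edge-disjoint $h(u)$--$h(w)$ paths in $H'$ to two edge-disjoint $u$--$w$ paths in $G'$. When both condensed paths pass through the same supervertex $C$, the first enters at some $a_1$ and exits at some $b_1$ while the second enters at $a_2$ and exits at $b_2$, and you need internal paths $a_1 \to b_1$ and $a_2 \to b_2$ in $S_C$ that are edge-disjoint \emph{from each other}. You justify this by the $2$-edge-connectivity of $S_C$, but $2$-edge-connectivity only gives two edge-disjoint paths between a \emph{single} ordered pair of terminals; it does not give pairwise edge-disjoint paths for two \emph{distinct} terminal pairs. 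That is the directed $2$-edge-disjoint-paths problem, which is infeasible in some digraphs of arbitrarily high connectivity (Thomassen's highly connected non-$2$-linked digraphs, transferred to the edge version), so the claimed lifting does not follow from the stated hypotheses. The same objection applies to your parallel claim that $h(u)\leftrightarrow_{\mathrm{2e}}h(w)$ in $H$ implies $u\leftrightarrow_{\mathrm{2e}}w$ in $G$.

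The paper sidesteps path surgery entirely by arguing with cuts: for any cut $(U,W)$ of $G'$ with $u\in U$ and $w\in W$, either some component subgraph $S_C$ is split by the cut, in which case its $2$-edge-connectivity forces at least two edges of $S_C$ in each direction across the cut, or no component is split, in which case the cut projects to a cut of $H'$, which has at least two edges in each direction because the \textsf{2EC-B} algorithms preserve this cut property of $H$. Menger's theorem then yields the two edge-disjoint paths in each direction without ever having to construct them. To repair your proof you should replace the path-lifting step with this cut-counting argument (or an equivalent single-commodity flow argument); as written, the key step does not hold.
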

\begin{proof}
Digraph $G'$ is strongly connected because the algorithms do not remove strong bridges. It is also clear that $G'$ and $G$ have the same $2$-edge-connected components. So it remains to consider the $2$-edge-connected blocks. Let $u$ and $w$ be two arbitrary vertices of $G$. We show that $u$ and $w$ are $2$-edge-connected in $G'$ if and only if they are $2$-edge-connected in $G$.
The ``only if'' direction follows from the fact that $G'$ is a subgraph of $G$. We now prove the ``if'' direction. Suppose $u$ and $w$ are $2$-edge-connected in $G$.
If $u$ and $w$ are located in the same $2$-edge-connected component then obviously they are $2$-edge-connected in $G'$. Suppose now that $u$ and $w$ are located in different components, so $h(u) \not= h(w)$. By construction,  for any $(S,T)$ cut in $H$ such that $h(u) \in S$ and $h(v) \in T$ there are at least two edges directed from $S$ to $T$ and at least two edges directed from $T$ to $S$. This property is maintained by all algorithms, so it also holds in $H'$.
Then, for any $(U,W)$ cut in the expanded graph $G'$ such that $u \in U$ and $w \in W$ there are at least two edges directed from $U$ to $W$ and at least two edges directed from $W$ to $U$. So $u$ and $w$ are $2$-edge-connected in $G'$ by Menger's theorem.
\end{proof}

As a special case of applying \textsf{Test2EDP-B} to $H$, we can immediately remove loops and parallel edges $(h(u),h(v))$ if $H$ has more than two edges directed from $h(u)$ to $h(v)$.
To obtain faster implementations,
we solve the \textsf{2ECSS} problems in linear-time using edge-disjoint spanning trees~\cite{edge-disjoint:edmonds,st:t}.
Let $C$ be a $2$-edge-connected component of $G$. 
We select an arbitrary vertex $v \in C$ as a root and compute two edge-disjoint spanning trees in the flow graph $C(v)$
and  two edge-disjoint spanning trees in the reverse flow graph $C^R(v)$. The edges of these spanning trees give a $2$-approximate solution $C'$ for \textsf{2ECSS} on $C$.
Moreover, as in \textsf{2EC-B}, we can apply algorithms \textsf{Test2EDP-BC}, \textsf{Test2ECB-BC} and \textsf{Hybrid-BC} on the sparse subgraph computed by \textsf{IST-BC}.
Then, these algorithms produce a $4$-approximation for \textsf{2EC-B-C} in $O(n^2)$ time. 
Furthermore, for these $O(n^2)$-time algorithms, we can improve the approximate solution $C'$ for \textsf{2ECSS} on each $2$-edge-connected component $C$ of $G$, by applying
the two edge-disjoint paths test on the edges of $C'$.
We incorporate all these ideas in all our implementations.

We can also use the condensed graph in order to obtain an efficient approximation algorithm for \textsf{2EC-C}.
To that end, we can apply the algorithm of Laehanukit et al.~\cite{LGS:MSCS:2012} and get a $3/2$-approximation of the 
\textsf{2ECSS} problem independently for each $2$-edge-connected component of $G$.
Then, since we only need to preserve the strong connectivity of $H$, we can run the algorithm of Vetta~\cite{Vetta:MSCS:2001} on a digraph $\tilde{H}$ that results from $H$ after removing all loops and parallel edges.
This computes a spanning subgraph $H'$ of $\tilde{H}$ that is a $3/2$-approximation for \textsf{SCSS} in $H$.
The corresponding expanded graph $G'$, where we substitute each supervertex $h(v)$ of $H$ with the approximate smallest \textsf{2ECSS}, gives a $3/2$-approximation for \textsf{2EC-C}.
A faster and deterministic $2$-approximation algorithm for \textsf{2EC-C} can be obtained as follows.
For the \textsf{2ECSS} problems we use the edge-disjoint spanning trees $2$-approximation algorithm described above.
Then, we solve \textsf{SCSS} on $\tilde{H}$ by applying the linear-time algorithm of Zhao et al.~\cite{ZNI:MSCS:2003}.
This yields a $2$-approximation algorithm for \textsf{2EC-C} that runs in linear time once the $2$-edge-connected components of $G$ are available (if not, they can be computed in $O(n^2)$ time~\cite{2CC:HenzingerKL14}).
We refer to this algorithm as \textsf{ZNI-C}.

\begin{theorem}
\label{theorem:2EC-C}
There is a polynomial-time algorithm for \textsf{2EC-C} that achieves an approximation ratio of $3/2$. Moreover, if the $2$-edge-connected components of $G$ are available, then we can compute a $2$-approximate \textsf{2EC-C} in linear time.
\end{theorem}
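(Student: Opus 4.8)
The plan is to analyze the construction already sketched above. Compute the $2$-edge-connected components $C_1,\dots,C_t$ of $G$, form the condensed graph $H$, and delete its loops and parallel edges to obtain a simple digraph $\tilde H$; this clean-up takes linear time and does not change the size of a smallest \textsf{SCSS}, since an optimal \textsf{SCSS} of $H$ never uses a loop and never uses two parallel copies of an edge, so $\mathrm{opt}_{\textsf{SCSS}}(H)=\mathrm{opt}_{\textsf{SCSS}}(\tilde H)$. Then run an approximation algorithm for \textsf{SCSS} on $\tilde H$ to get a strongly connected spanning subgraph $H'$, and for each $C_i$ run an approximation algorithm for \textsf{2ECSS} to get a $2$-edge-connected spanning subgraph $C_i'$ of $C_i$; finally expand $H'$, replacing each supervertex by the corresponding $C_i'$, to obtain $G'$ with $|E(G')|=\sum_i|E(C_i')|+|E(H')|$. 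For the $3/2$ bound I would instantiate the \textsf{SCSS} step with Vetta's algorithm~\cite{Vetta:MSCS:2001} (ratio $3/2$) and the \textsf{2ECSS} step with the algorithm of Laehanukit et al.~\cite{LGS:MSCS:2012} (ratio $1+1/k=3/2$ for $k=2$); for the linear-time bound I would instead take two maximally edge-disjoint spanning trees in $C_i(v)$ and in $C_i^R(v)$~\cite{edge-disjoint:edmonds,st:t} (ratio $2$) together with the algorithm of Zhao et al.~\cite{ZNI:MSCS:2003} on $\tilde H$ (ratio $5/3\le 2$).

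First I would check that $G'$ is a valid \textsf{2EC-C} solution. Each $C_i'$ is $2$-edge-connected, hence strongly connected, and $H'$ is strongly connected, so $G'$ is strongly connected. Since $G'\subseteq G$, every $2$-edge-connected subgraph of $G'$ is one of $G$, so the $2$-edge-connected components of $G'$ refine those of $G$; conversely, each $C_i$ lies within a single $2$-edge-connected component of $G'$ because $C_i'\subseteq G'$ is a $2$-edge-connected spanning subgraph of $C_i$. Hence the $2$-edge-connected components of $G'$ coincide with those of $G$ (this is the same reasoning as in Lemma~\ref{lemma:condensed}, restricted to components).

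The crux is the approximation ratio. Let $\mathrm{OPT}$ be an optimal \textsf{2EC-C} solution and let $c,c'$ be the ratios of the \textsf{2ECSS} and \textsf{SCSS} subroutines used. Because $\mathrm{OPT}$ has the same $2$-edge-connected components as $G$, for each $i$ the subgraph of $\mathrm{OPT}$ induced on $C_i$ is a $2$-edge-connected spanning subgraph of $C_i$, so it has at least $\mathrm{opt}_{\textsf{2ECSS}}(C_i)\ge |E(C_i')|/c$ edges. The remaining edges of $\mathrm{OPT}$ run between distinct components; contracting the components turns them into a strongly connected spanning subgraph of $H$ (strongly connected because $\mathrm{OPT}$ is), so their number is at least $\mathrm{opt}_{\textsf{SCSS}}(H)=\mathrm{opt}_{\textsf{SCSS}}(\tilde H)\ge |E(H')|/c'$. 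These two edge sets partition $E(\mathrm{OPT})$, so $|E(\mathrm{OPT})|\ge \tfrac1c\sum_i|E(C_i')|+\tfrac1{c'}|E(H')|\ge |E(G')|/\max\{c,c'\}$. With $c=c'=3/2$ this yields the $3/2$-approximation, and with $c=2$, $c'=5/3$ it yields the $2$-approximation. I expect this decomposition of $\mathrm{OPT}$ — in particular, using the hypothesis that $\mathrm{OPT}$ preserves the components to conclude that $\mathrm{OPT}$ restricted to each $C_i$ is itself a \textsf{2ECSS} — to be the main point; the rest is bookkeeping.

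Finally I would account for the running time. Building $H$ and $\tilde H$ from the $2$-edge-connected components takes linear time. In the first variant Vetta's and Laehanukit et al.'s algorithms are polynomial, giving a polynomial-time $3/2$-approximation. In the second variant, the two maximally edge-disjoint spanning trees of each $C_i(v)$ and $C_i^R(v)$ are computed in linear time via Tarjan's algorithm with the Gabow--Tarjan disjoint-set-union structure~\cite{st:t,dsu:gt}, and Zhao et al.'s algorithm runs in linear time~\cite{ZNI:MSCS:2003}; hence \textsf{ZNI-C} runs in linear time once the $2$-edge-connected components are available, and otherwise in $O(n^2)$ time after first computing them with the algorithm of~\cite{2CC:HenzingerKL14}.
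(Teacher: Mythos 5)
Your construction and choice of subroutines coincide exactly with the paper's (Laehanukit et al.\ plus Vetta for the $3/2$ bound; edge-disjoint spanning trees plus Zhao et al.\ for the linear-time $2$-approximation, both routed through the condensed graph $\tilde H$ with loops and parallel edges removed). The paper states the theorem without a formal proof, relying on the preceding discussion; your explicit decomposition of the optimum into intra-component \textsf{2ECSS} parts and an inter-component \textsf{SCSS} part supplies exactly the bookkeeping that discussion leaves implicit, and it is correct.
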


\subsection{Independent Spanning Trees}
\label{section:IST}

Here we present our new algorithm \textsf{IST-B} and prove that it
gives a linear-time $4$-approximation for \textsf{2EC-B} and
\textsf{2EC-B-C}. Since \textsf{IST-B} is a modified version of
the sparse certificate $C(G)$ for the $2$-edge-connected blocks of
a digraph $G$ \cite{2ECB} (\textsf{IST-B original}), let us review
\textsf{IST-B original} first.

Let $s$ be an arbitrarily chosen start vertex of the strongly connected digraph $G$.
The \emph{canonical decomposition} of the dominator tree $D(s)$ is the forest of rooted trees that results from $D(s)$ after the deletion of all the bridges of $G(s)$.
Let $T(v)$ denote the tree containing vertex $v$ in this decomposition. We refer to the subtree roots in the canonical decomposition as \emph{marked vertices}.
For each marked vertex $r$ we define the \emph{auxiliary graph $G_r = (V_r, E_r)$ of $r$} as follows.
\begin{itemize}
\item The vertex set $V_r$ of $G_r$ consists of all the vertices in $T(r)$, referred to as \emph{ordinary} vertices, and a set of
\emph{auxiliary} vertices, which are obtained by contracting vertices in $V\setminus T(r)$, as follows.
\begin{itemize}
\item Let $v$ be a vertex in $T(r)$.
We say that $v$ is a \emph{boundary vertex in} $T(r)$ if $v$ has a marked child in $D(s)$.
Let $w$ be a marked child of a boundary vertex $v$:
all the vertices that are descendants of  $w$ in $D(s)$  are contracted into $w$.
\end{itemize}
\item All vertices in $V \setminus T(r)$ that are not descendants of $r$ are contracted into $d(r)$ ($r \not= s$ if any such vertex exists).
\end{itemize}
Figures~\ref{fig:l1-auxiliary} and \ref{fig:l2-auxiliary} illustrate the canonical decomposition of a dominator tree and the corresponding auxiliary graphs. 

\begin{figure}[t!]
\begin{center}
\includegraphics[trim={0 0 0 7cm}, clip=true, width=1.0\textwidth]{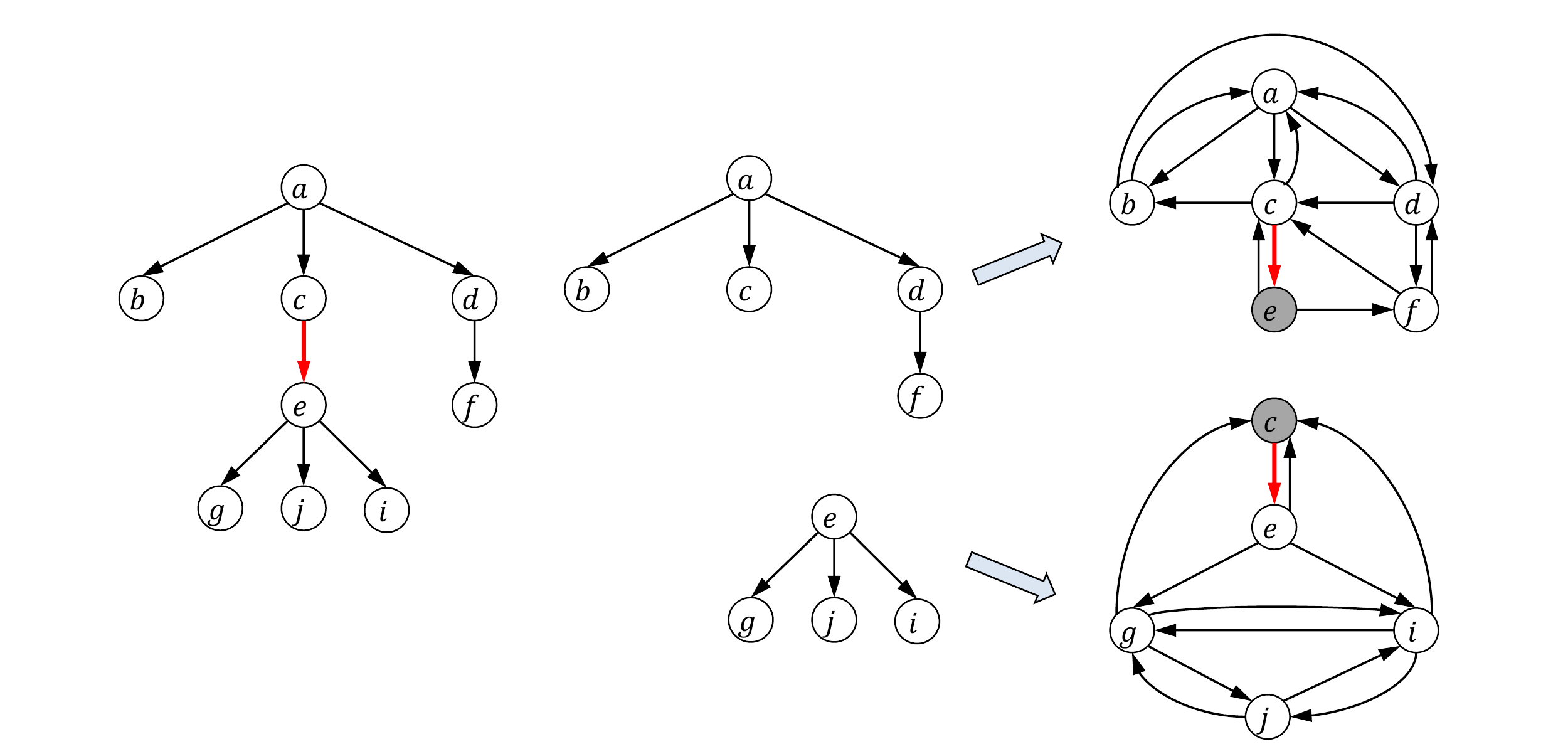}
\end{center}
\vspace{-0.2cm}
\hspace{2.8cm} (a) \hspace{3.9cm} (b) \hspace{4.8cm} (c)
\caption{(a) The dominator tree $D(a)$ of the flow graph of Figure \ref{fig:2ebc} with start vertex $a$. The strong bridge $(c,e)$, shown in red (better viewed in color), appears as an edge of the dominator tree. (b) The subtrees $T(a)$ and $T(e)$ of the canonical decomposition of $D(a)$ after the deletion of $(c,e)$, and (c) their corresponding first-level auxiliary graphs $G_a$ and $G_e$. Auxiliary vertices are shown grey.
}
\label{fig:l1-auxiliary}
\end{figure}
\begin{figure}[h!]
\begin{center}
\includegraphics[trim={0 0 0 10cm}, clip=true, width=1.0\textwidth]{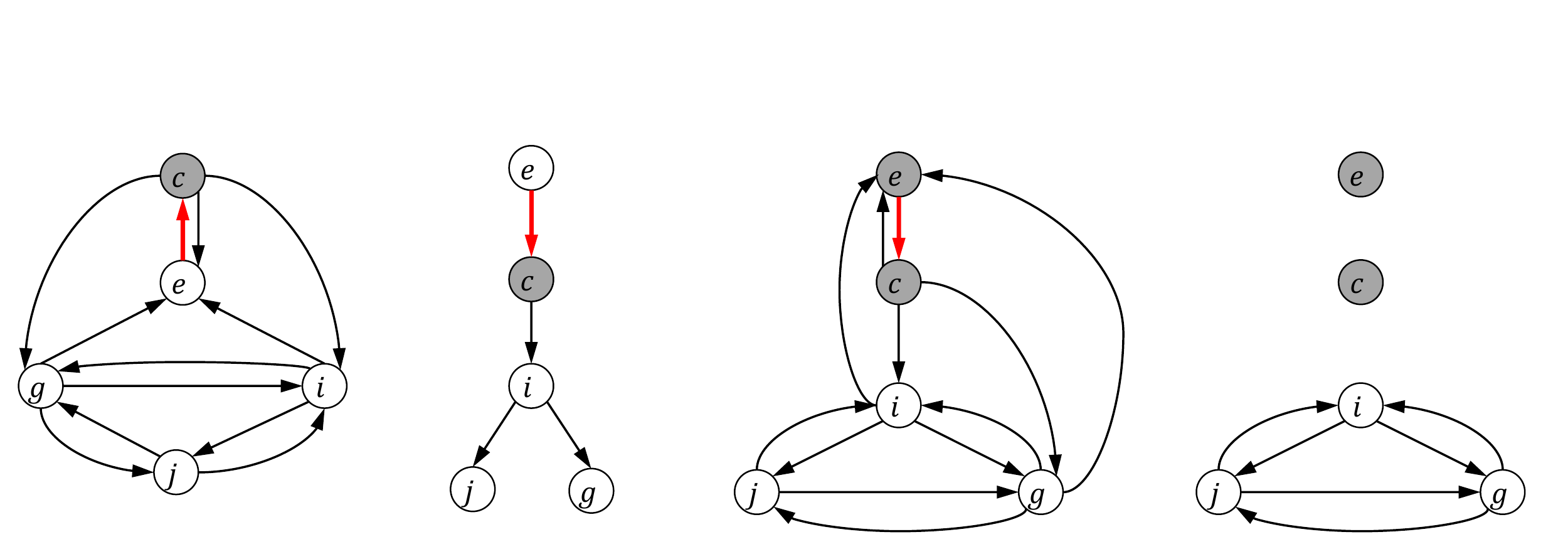}
\end{center}
\vspace{-0.5cm}
\hspace{1.5cm} (a) \hspace{3cm} (b) \hspace{3.1cm} (c) \hspace{4.1cm} (d)
\caption{(a) The reverse graph $H^R$ of the auxiliary graph $H=G_e$ of Figure \ref{fig:l1-auxiliary}. The strong bridge $(c,e)$ of the original digraph, shown in red (better viewed in color), appears as the strong bridge $(e,c)$ in $H^R$. (b) The dominator tree of $H^R(e)$ with start vertex $e$. (c) The second-level auxiliary graph $H^R_c$. Auxiliary vertices are shown grey. (d) The strongly connected components of $H^R_c \setminus (e,c)$.
The strongly connected component $\{i,j,g\}$ is a $2$-edge-connected block of the original digraph.
}
\label{fig:l2-auxiliary}
\end{figure}

During those contractions, parallel edges are eliminated. We call an edge in $E_r \setminus E$ \emph{shortcut edge}. Such an edge has an auxiliary vertex as an endpoint.
We associate each shortcut edge $(u,v) \in E_r$  with a corresponding original edge $(x,y) \in E$, i.e., $x$ was contracted into $u$ or $y$ was contracted into $v$ (or both).
If $G(s)$ has $b$ bridges then all the auxiliary graphs $G_r$ have at most $n+2b$ vertices and $m+2b$ edges in total and can be computed in $O(m)$ time.
As shown in \cite{2ECB}, two ordinary vertices of an auxiliary graph $G_r$ are $2$-edge-connected in $G$ if and only if they are $2$-edge-connected in $G_r$.
Thus the $2$-edge-connected blocks of $G$ are a refinement of the vertex sets in the trees of the canonical decomposition.
The sparse certificate of \cite{2ECB} is constructed in three phases.
We maintain a list (multiset) $L$ of the edges to be added in $C(G)$; initially $L=\emptyset$. The same edge may be inserted into $L$ multiple times, but the total number of insertions will be $O(n)$. So
the edges of $C(G)$ can be obtained from $L$ after we remove duplicates, e.g. by  using radix sort. Also, during the construction, the algorithm may choose a shortcut edge or a reverse edge to be inserted into $L$. In this case we insert the associated original edge instead.

\begin{description}
\item[Phase 1.] We insert into $L$ the edges of two independent
spanning trees, $B(G(s))$ and $R(G(s))$ of $G(s)$.

\item[Phase 2.] For each auxiliary graph $H=G_r$ of $G(s)$, that we
refer to as the \emph{first-level auxiliary graphs}, we compute
two independent spanning trees $B(H^R(r))$ and $R(H^R(r))$ for the
corresponding reverse flow graph $H^R(r)$ with start vertex $r$.
We insert into $L$ the edges of these two spanning trees. We note
that $L$ induces a strongly connected spanning subgraph of $G$ at
the end of this phase.

\item[Phase 3.] Finally, in the third phase we 
process the \emph{second-level auxiliary graphs}, which are the auxiliary graphs of $H^R$ for all first-level auxiliary graphs $H$.
Let $(p,q)$ be a bridge of $H^R(r)$, and let $H_q^R$ be the corresponding second-level auxiliary graph.
For every strongly connected component $S$ of $H_q^R \setminus (p,q)$, we choose an arbitrary vertex $v \in S$ and compute a spanning tree of $S(v)$ and a spanning tree of $S^R(v)$, and insert their edges into $L$; see Figure~\ref{fig:l2-auxiliary}.
\end{description}

The above construction inserts $O(n)$ edges into $C(G)$, and therefore achieves a constant approximation ratio for \textsf{2EC-B}. It is not straightforward, however, to give a good bound for this constant, since the spanning trees that are used in this construction contain auxiliary vertices that are created by applying two levels of the canonical decomposition. In the next section we analyze a modified version of the sparse certificate construction, and show that it achieves a $4$-approximation for \textsf{2EC-B}. Then we show that we also achieve a $4$-approximation for \textsf{2EC-B-C} by applying this sparse certificate on the condensed graph $H$.

\subsubsection{The new algorithm \textsf{IST-B}}
\label{section:IST-modified}

The main idea behind \textsf{IST-B} is to limit the number of edges added to the sparse certificate $C(G)$ because of auxiliary vertices.
In particular, we show that in Phase 2 of the construction it suffices to add at most one new edge for each first-level auxiliary vertex,
while in Phase 3 at most $2b$ additional edges are necessary for all second-level auxiliary vertices, where $b$ is the number of bridges in $G(s)$.

We will use the following lemma about the strong bridges in auxiliary graphs,
which implies that for any second-level auxiliary vertex $x$ that was not an auxiliary vertex in the first level,
subgraph $C(G)$ contains the unique edge leaving $x$ in $H$.

\begin{lemma}
\label{lemma:auxiliary-graph-strong-bridges}
Let $(u,v)$ be a strong bridge of a first-level auxiliary graph $H=G_r$ that is not a bridge in $G(s)$. Then $(v,u)$ is a bridge in the flow graph $H^R(r)$.
\end{lemma}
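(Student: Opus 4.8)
The plan is to argue by contradiction: assume $(v,u)$ is not a bridge in $H^R(r)$, and derive that $(u,v)$ cannot be a strong bridge of $H$, contradicting the hypothesis. Recall that a strong bridge $(u,v)$ of $H$ is, in particular, a bridge of the flow graph $H(r')$ for any suitable start vertex, or more usefully, by the characterization of Italiano et al.\ cited in the preliminaries, a strong bridge of $H$ appears either as a bridge of $H(r)$ or as a bridge of $H^R(r)$. So the first step is to observe that $(u,v)$, being a strong bridge of $H$ but \emph{not} a bridge of $G(s)$ (hence not forced into the dominator tree $D(s)$ in a way that would make it a bridge of $H(r)$), must show up on the reverse side. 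The hard part is making precise why "not a bridge in $G(s)$'' forces "not a bridge in $H(r)$'', so that the reverse-side alternative is the only one left.

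Concretely, I would proceed as follows. First, recall from \cite{2ECB} (summarized in the excerpt) that two ordinary vertices of $H=G_r$ are $2$-edge-connected in $G$ iff they are $2$-edge-connected in $H$, and that $r$ is the root of both $H(r)$ and $H^R(r)$. Second, I would use the structure of the auxiliary graph: the bridges of $H(r)$ correspond exactly to the bridges of $G(s)$ that lie "below'' $r$ in $D(s)$ within the tree $T(r)$ — this is essentially by construction of the canonical decomposition, since contracting the vertices outside $T(r)$ into auxiliary vertices $w$ (the marked children) or into $d(r)$ does not create new bridges in the forward direction, it only removes the portions of $D(s)$ that were separated by bridges already. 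Hence if $(u,v)$ is a strong bridge of $H$ that is \emph{not} a bridge of $G(s)$, it is not a bridge of $H(r)$ either. Third, invoke the Italiano et al.\ characterization at the level of $H$: a strong bridge of $H$ is a bridge of $H(r)$ or a bridge of $H^R(r)$ (for the fixed start vertex $r$). Since it is not the former, it must be the latter, i.e., $(v,u)$ — the reverse of $(u,v)$ — is a bridge of $H^R(r)$, which is exactly the claim.

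The main obstacle I anticipate is the second step: cleanly justifying that the bridge set of the forward flow graph $H(r)$ is contained in (indeed equals, restricted to original edges) the bridge set of $G(s)$ relative to $T(r)$, given that the auxiliary-vertex contractions can in principle merge paths and thereby \emph{destroy} bridges but should never \emph{create} new forward bridges among edges internal to $T(r)$. I would handle this by a direct path-lifting argument: any $r$-to-$w$ path in $H(r)$ that avoids $(u,v)$ lifts to an $s$-to-(corresponding vertex) path in $G(s)$ avoiding $(u,v)$ — using that $r$ is reachable from $s$ in $G(s)$ and that shortcut edges in $H$ correspond to genuine walks in $G$ through the contracted region — so an edge that is not a bridge of $G(s)$ cannot become a bridge of $H(r)$ unless it was already "cut off'' by a genuine bridge of $G(s)$, in which case it would be a bridge of $G(s)$ inside $T(r)$, contradiction. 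Once this containment is in hand, the contradiction closes immediately and the lemma follows.
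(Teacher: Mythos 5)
Your overall skeleton is the same as the paper's: invoke the characterization of Italiano et al.\ (a strong bridge of $H$ must appear as a bridge of $H(r)$ or as the reverse of a bridge of $H^R(r)$), then rule out the first alternative by showing that every bridge of $H(r)$ is (a copy of) a bridge of $G(s)$, so that an edge which is not a bridge of $G(s)$ is forced onto the reverse side. The paper closes the crucial second step structurally rather than by a path argument: it observes that the dominator tree $D_H(r)$ of $H(r)$, restricted to the ordinary vertices, is exactly $T(r)$, which by definition of the canonical decomposition contains no bridges of $G(s)$; and that each auxiliary vertex $x\neq d(r)$ has a unique entering edge $(d(x),x)$ in $H$, which is a bridge of $G(s)$ already. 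Hence the only candidate bridges of $H(r)$ are bridges of $G(s)$, and the dichotomy finishes the proof.

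The genuine gap in your write-up is that the path argument you sketch for this step points in the wrong direction. You propose to \emph{lift} an $(u,v)$-avoiding path of $H(r)$ to an $(u,v)$-avoiding path of $G(s)$. That lifting establishes ``$(u,v)$ not a bridge of $H(r)$ $\Rightarrow$ $(u,v)$ not a bridge of $G(s)$'', equivalently ``bridge of $G(s)$ $\Rightarrow$ bridge of $H(r)$'' --- the \emph{converse} of the containment you need. What your argument requires is the \emph{projection} direction: given that $(u,v)$ is not a bridge of $G(s)$, take an $s$-to-$v$ path in $G$ avoiding $(u,v)$, note that it must pass through $r$ (since $r$ dominates every ordinary vertex of $T(r)$ in $G(s)$), and project its suffix from $r$ onward into $H$ --- replacing maximal subpaths through contracted regions by the corresponding auxiliary vertices and shortcut edges --- to obtain an $r$-to-$v$ path in $H$ avoiding $(u,v)$. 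This projection is exactly where one must check that shortcut edges cannot accidentally coincide with $(u,v)$, which is guaranteed because $(u,v)$ joins two ordinary vertices of $T(r)$ while every shortcut edge has an auxiliary endpoint. With the direction corrected (or with the paper's dominator-tree observation substituted), your proof goes through.
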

\begin{proof}
Consider the dominator tree $D_H(r)$ of the flow graph $H(r)$. Let $D'$ be the tree that results from $D_H(r)$ after the deletion of the auxiliary vertices.
Then we have $D'=T(r)$.
Moreover, for each auxiliary vertex $x \not= d(r)$, $(d(x),x)$ is the unique edge entering $x$ in $H$, which is a bridge in $G(s)$.
Also, $(d(r),r)$ is the unique edge leaving $d(r)$ in $H$ which too is a bridge in $G(s)$.
By \cite{Italiano2012} we have that a strong bridge of $H$ must appear as a bridge of $H(r)$ or as the reverse of a bridge in $H^R(r)$, so the lemma follows.
\end{proof}

First we will describe our modified construction and apply a charging scheme for the edges added to $C(G)$ that are adjacent to auxiliary vertices.
Then, we use this scheme to prove that the modified algorithm achieves the desired $4$-approximation.
Phase~1 remains the same and we explain the necessary modifications for Phases~2 and 3.

\begin{description}
\item[Modified Phase 2.] Let $H=G_r$ be a first-level auxiliary graph.
In the sparse certificate we include two independent spanning trees, $B(H^R(r))$ and $R(H^R(r))$, of the reverse flow graph $H^R(r)$ with start vertex $r$.
In our new construction, each auxiliary vertex $x$ in $H^R$ will contribute at most one new edge in $C(G)$.
Suppose first that
$x=d(r)$, which exists if $r \not=s$. The only edge entering $d(r)$ in $H^R$ is $(r,d(r))$ which is the reverse edge of the bridge $(d(r),r)$ of $G(s)$. So $d(r)$ does not add a new edge in $C(G)$, since all the bridges of $G(s)$ were added in the first phase of the construction.
Next we consider an auxiliary vertex $x \not= d(r)$. In $H^R$ there is a unique edge $(x,z)$ leaving $x$, where $z=d(x)$.
This edge is the  reverse of the bridge $(d(x),x)$ of $G(s)$.
Suppose that $x$ has no children in $B(H^R(r))$ and $R(H^R(r))$.
Deleting $x$ and its two entering edges in both spanning trees does not affect the existence of two edge-disjoint paths from $v$ to $r$ in $H$, for any ordinary vertex $v$.
However, the resulting graph $C(G)$ at the end may not be strongly connected.
To fix this, it suffices to include in $C(G)$ the reverse of an edge entering $x$ from only one spanning tree.
Finally, suppose that $x$ has children, say in $B(H^R(r))$.
Then $z=d(x)$ is the unique child of $x$ in $B(H^R(r))$, and the reverse of the edge $(x,z)$ of $B(H^R(r))$ is already included in $C(G)$ by Phase 1.
Therefore, in all cases, we can charge to $x$ at most one new edge.

\item[Modified Phase 3.]
Let $H_q^R$ be a second-level auxiliary graph of $H^R$.
Let $e$ be the strong bridge entering $q$ in $H^R$, and let $S$ be a strongly connected component in $H_q^R \setminus e$.
In our sparse certificate we include the edges of a strongly connected subgraph of $S$, 
so we have spanning trees $T$ and $T^R$ of 
$S(v)$ and $S^R(v)$, respectively, rooted at an arbitrary ordinary vertex $v$.
Let $x$ be an auxiliary vertex of $S$. We distinguish two cases:
\begin{itemize}
\item[(i)] If $x$ is a first-level auxiliary vertex in $H$ then it has a unique entering edge $(w,x)$ which is a bridge in $G(s)$ already included in $C(G)$.
\item[(ii)] If $x$ is ordinary in $H$ but a second-level auxiliary vertex in $H_q$ then it has a unique leaving edge $(x,z)$, which, by Lemma~\ref{lemma:auxiliary-graph-strong-bridges}, is a bridge in $H^R(r)$
and $C(G)$ already contains a corresponding original edge.
\end{itemize}
Consider the first case. If $x$ is a leaf in $T^R$ then we can delete the edge entering $x$ in $T^R$.
Otherwise, $w$ is the unique child of $x$ in $T^R$, and the corresponding edge $(w,x)$ entering $x$ in $H$ has already been inserted in $C(G)$.
The symmetric arguments hold if $x$ is ordinary in $H$.

This analysis implies that we can associate each second-level auxiliary vertex with one edge in each of 
$T$ and $T^R$ that is either not needed in $C(G)$ or has already been inserted.
If all such auxiliary vertices are associated with distinct edges then they do not contribute any new edges in $C(G)$.
Suppose now that there are two second-level auxiliary vertices $x$ and $y$ that are associated with a common edge $e$.
This can happen only if one of these vertices, say $y$, is a first-level auxiliary vertex, and $x$ is ordinary in $H$.
Then $y$ has a unique entering edge in $H$, which means that $e=(x,y)$ is a strong bridge, and thus already in $C(G)$.
Also $e \in T$ and $e^{R}=(y,x) \in T^R$.
In this case, we can treat $x$ and $y$ as a single auxiliary vertex that results from the contraction of $e$, which contributes at most two new edges in $C(G)$.
Since $y$ is a first-level auxiliary vertex, this can happen at most $b$ times in all second-level auxiliary graphs, so a bound of $2b$ such edges follows.
\end{description}

Using the above construction we can now prove that our modified version of the sparse certificate achieves an approximation ratio of $4$.

\begin{theorem}
\label{theorem:ApproximationRatio}
There is a linear-time approximation algorithm for the \textsf{2EC-B} problem that achieves an approximation ratio of $4$.
Moreover, if the $2$-edge-connected components of the input digraph are known in advance, we can compute a $4$-approximation for the \textsf{2EC-B-C} problem in linear time.
\end{theorem}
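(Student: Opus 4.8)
The plan is, for each of the two statements, to show the algorithm outputs a feasible spanning subgraph, to bound the number of edges it uses, and to compare this against a simple lower bound on the optimum. For \textsf{2EC-B} the algorithm is \textsf{IST-B}, so first I would argue that the subgraph $C(G)$ it produces is strongly connected and has exactly the $2$-edge-connected blocks of $G$. Feasibility of the \emph{unmodified} sparse certificate is already established in~\cite{2ECB}, so the work is to check that the deletions introduced in Modified Phases~2 and~3 are harmless. The facts I would invoke are: (i) by~\cite{2ECB}, the $2$-edge-connected blocks of $G$ refine the canonical decomposition and two ordinary vertices of an auxiliary graph $G_r$ are $2$-edge-connected in $G$ iff they are $2$-edge-connected in $G_r$, together with Lemma~\ref{lemma:auxiliary-graph-strong-bridges} for the analogous statement at the second level, so that it suffices to certify $2$-edge-connectivity \emph{inside} each first- and second-level auxiliary graph; (ii) deleting an auxiliary vertex $x$ that is a leaf in both independent spanning trees of $H^R(r)$, together with its two entering tree edges, does not destroy any pair of edge-disjoint paths from an ordinary vertex to $r$, since such an $x$ is never an internal vertex of such a path (and symmetrically in Phase~3); and (iii) re-inserting one edge incident to each deleted $x$ restores strong connectivity. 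The rule of substituting the associated original edge for every shortcut or reverse edge keeps $C(G)\subseteq G$.

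Next I would bound $|E(C(G))|$ by summing the contributions of the three phases as set up in Section~\ref{section:IST-modified}: Phase~1 contributes the $2(n-1)$ edges of two independent spanning trees of $G(s)$; in Phase~2 each first-level auxiliary vertex is charged at most one new edge (the remaining Phase-2 edges lying on the ordinary vertices, whose vertex sets partition $V$ across all first-level auxiliary graphs); and in Phase~3 at most $2b$ further edges are charged to the second-level auxiliary vertices, where $b$ is the number of bridges of $G(s)$. Since the total number of auxiliary vertices created is $O(b)=O(n)$, collecting these bounds gives $|E(C(G))|\le 4n$; this accounting is exactly what the charging scheme of Modified Phases~2--3 is designed to control, and it is the step I expect to be the most delicate part of the proof. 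As any feasible solution is strongly connected and hence uses at least $n$ edges, $C(G)$ is a $4$-approximation. Finally, the dominator tree $D(s)$, the canonical decomposition, the auxiliary graphs, the independent spanning trees, and the strongly connected components in Phase~3 are all computable in linear time~\cite{dominators:bgkrtw,st:t,domv:gt05,domcert}, so \textsf{IST-B} runs in linear time.

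For \textsf{2EC-B-C}, assume the $2$-edge-connected components of the input are known, form the condensed graph $H$ with $\ell$ supervertices, and build a solution by (1) replacing each component $C$ by a $2$-approximate \textsf{2ECSS} $C'$ obtained from two edge-disjoint spanning trees of $C(v)$ and of $C^R(v)$, and (2) running \textsf{IST-B} on $H$ to obtain a subgraph $H'$. By Lemma~\ref{lemma:condensed}, the expanded graph $G'$ is strongly connected with the same $2$-edge-connected components and blocks as $G$. For the ratio, split the optimum $G^*$ into its intra-component and inter-component edges: each component $C$ must remain $2$-edge-connected in $G^*$, so $G^*$ uses at least $\mathrm{opt}_{\textsf{2ECSS}}(C)$ edges inside $C$; and contracting the components makes $G^*$ a strongly connected spanning subgraph of $H$, so $G^*$ uses at least $\ell$ inter-component edges; hence $|E(G^*)|\ge \sum_C \mathrm{opt}_{\textsf{2ECSS}}(C)+\ell$. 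On the other hand $|E(G')|=\sum_C|E(C')|+|E(H')|\le \sum_C 2\,\mathrm{opt}_{\textsf{2ECSS}}(C)+4\ell$, where the second estimate is the \textsf{2EC-B} bound applied to the $\ell$-vertex graph $H$. Comparing the two term by term yields a $4$-approximation, and every step runs in linear time once the components are available; otherwise they can be computed in $O(n^2)$ time~\cite{2CC:HenzingerKL14}. The main obstacle throughout is the edge count of the second paragraph.
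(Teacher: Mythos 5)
There is a genuine gap in your edge count for \textsf{IST-B}, and the lower bound you compare against is too weak to absorb it. Your accounting omits the bulk of Phase~3: for every strongly connected component $S$ of a second-level auxiliary graph (after deleting the entering strong bridge), the algorithm inserts the edges of two spanning trees of $S(v)$ and $S^R(v)$, which costs roughly $2o_S$ edges over the \emph{ordinary} vertices of $S$, not just the $O(b)$ edges charged to second-level auxiliary vertices. Summing over components, Phase~3 contributes about $2n'+2b$ edges, where $n'$ is the number of vertices lying in nontrivial $2$-edge-connected blocks. The correct total is therefore $(2n-b-2)+(2n-b)+(2n'+2b)\le 4n+2n'-2$, not $4n$; indeed the paper's own example in the concluding remarks exhibits a certificate with $6n+O(1)$ edges, so the bound $|E(C(G))|\le 4n$ you assert is simply false. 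With the honest bound $4(n+n')$, the trivial lower bound of $n$ edges for a strongly connected spanning subgraph no longer yields ratio $4$. The missing idea is the stronger lower bound: every vertex in a nontrivial block must have indegree and outdegree at least $2$ in any feasible solution, and the remaining vertices at least $1$, so the optimum has at least $(n-n')+2n'=n+n'$ edges. Pairing $4(n+n')$ against $n+n'$ is what gives the factor $4$.

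The same issue propagates to your \textsf{2EC-B-C} argument. You bound the inter-component part by ``$4\ell$'' (applying the \textsf{2EC-B} bound to the $\ell$-vertex condensed graph $H$) and lower-bound the optimum's inter-component edges by $\ell$; but the actual upper bound is $4(N+N')$ where $N'$ counts supervertices of $H$ in nontrivial blocks, so you again need the corresponding lower bound $N+N'$ on the inter-component edges of any feasible solution (supervertices in nontrivial blocks of $H$ require two entering and two leaving inter-component edges, since the corresponding original vertices are $2$-edge-connected across component boundaries). Your decomposition of the optimum into intra- and inter-component edges and the $2$-approximation for each \textsf{2ECSS} subproblem are otherwise in line with the paper's proof, as is your treatment of feasibility and of the linear running time.
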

\begin{proof}
Let $b$ denote (as above)  the number of bridges in the flow graph $G(s)$. Note that $b \le n-1$.
We consider the three phases of the construction of $C(G)$ separately and account for the new edges that are added in each phase.
Consider the two independent spanning trees $B$ and $R$ of $G(s)$ that are computed in the first phase. If an edge $(u,v)$ is a bridge in $G(s)$ then it is the unique edge entering $v$ in $B \cup R$. Thus these two independent spanning trees add into $L$ exactly $2(n-b-1)+b=2n-b-2$ edges.

Now we consider the Modified Phase 2. Let $H=G_r$ be a first-level auxiliary graph. Let $o_r$ and $a_r$ be, respectively, the number of ordinary and auxiliary vertices in $G_r$.
In the sparse certificate we include two independent spanning trees, $B(H^R(r))$ and $R(H^R(r))$, of the reverse flow graph $H^R(r)$ with start vertex $r$.
As already explained in the analysis of this phase, each auxiliary vertex $x$ in $H^R$ may contribute at most one new edge in $C(G)$.
Since $r$ and $d(r)$ do not contribute any new edges, the total number of edges added for $H$ is at most $2(o_r-1)+(a_r-1)$. Hence, the total number of edges added during the second phase is at most $\sum_{r}(2o_r+a_r-3)$, where the sum is taken over all $b+1$ marked vertices $r$. Observe that $\sum_r (o_r) = n$ and $\sum_r (a_r) = 2b$, so we have $\sum_{r}(2o_r+a_r-3) \le 2n + 2b -3b = 2n -b$.
We note that, as in the original construction, $C(G)$ is strongly connected at the end of this phase. Moreover, in this phase we include in $L$ the strong bridges of $G$ that are not bridges in $G(s)$.

It remains to account for the edges added during the third phase.
Here we consider the strongly connected components for each auxiliary graph $H_q^R$ of $H^R$ after removing the strong bridge entering $q$ in $H^R$.
By the argument in the description of the Modified Phase 3, the second-level auxiliary vertices contribute at most $2b$ new edges in total.

We note that the $2$-edge-connected blocks of $G$ are formed by the ordinary vertices in each strongly connected component computed for the second-level auxiliary graphs.
Consider such a strongly connected component $S$. Let $o_S$ be the number of ordinary vertices in $S$. If $o_S \le 1$ then we do not include any edges for $S$. So suppose that $o_S \ge 2$.
Excluding the at most $2b$ additional edges, the auxiliary vertices in $S$ do not contribute any new edges. So the number of edges added by $S$ is bounded by $2o_S$.
Then, the third phase adds $2n'+2b$ edges in total, where $n'=\sum_S o_S$ and the sum is taken over all strongly connected components with $o_S \ge 2$.

Overall, the number of edges added in $C(G)$ is at most $(2n - b - 2) + (2n -b) + (2n'+2b) = 4n - 2+ 2n' \le 4(n+n')$.
Next, we observe that these $n'$ vertices must have indegree and outdegree at least equal to $2$ in any solution to the \textsf{2EC-B} problem.
The remaining $n-n'$ vertices must have indegree and outdegree at least equal to one, since the spanning subgraph must be strongly connected.
Therefore, the smallest \textsf{2EC-B} has at least $(n-n') + 2n' = n + n'$ edges. The approximation ratio of $4$ follows.

Now consider the \textsf{2EC-B-C} problem, where we apply our new sparse certificate on the condensed graph $H$.
Let $k$ be the number of edges computed by our algorithm for all $2$-edge-connected components, where we apply the edge-disjoint spanning trees construction.
Let $k^{\ast}$ be the total number of edges in an optimal solution. Then $k \le 2 k^{\ast}$.
Suppose that the condensed graph has $N$ vertices.
By the previous analysis, we have that the sparse certificate of $H$ has less than $4(N+N')$ edges, where $N'$ is the number of vertices in nontrivial blocks in the condensed graph.
So, our algorithm computes a sparse certificate for $G$ with less than $4(N+N')+k \le 4(N+N') + 2k^{\ast} < 4 (N+N' + k^{\ast})$ edges.
The smallest \textsf{2EC-B-C} has at least $N+N'+k^{\ast}$, so the approximation ratio of $4$ follows.
\end{proof}

Next we note that the above proof implies that the \textsf{Test2ECB} algorithms also achieve a $4$-approximation even when they are run on the original digraphs instead of the sparse certificates.

\begin{corollary}
\label{corollary:Test2CB-B}
Algorithm \textsf{Test2ECB-B} (resp., \textsf{Test2ECB-BC}) applied on the original input (resp., condensed) graph gives a $4$-approximate solution for \textsf{2EC-B} (resp., \textsf{2EC-B-C}).
\end{corollary}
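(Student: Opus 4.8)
\textbf{Proof proposal for Corollary~\ref{corollary:Test2CB-B}.}
The plan is to reduce the statement to Theorem~\ref{theorem:ApproximationRatio} by showing that a \emph{minimal} feasible solution cannot be larger than the sparse certificate that \textsf{IST-B} would produce \emph{from that very solution}. Let $G'$ be the spanning subgraph of $G$ returned by \textsf{Test2ECB-B}. As noted before the corollary, $G'$ is strongly connected, has the same $2$-edge-connected blocks as $G$, and is minimal for this property: deleting any single edge of $G'$ either destroys strong connectivity or alters some $2$-edge-connected block.

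The key step is the observation that $G'$ coincides with its own \textsf{IST-B} sparse certificate. Apply the \textsf{IST-B} construction to $G'$ (with an arbitrary start vertex) and let $C(G')$ be the result. By construction $C(G')$ is a spanning subgraph of $G'$ (every edge it uses is an edge of $G'$, loops/parallel and shortcut/reverse edges being replaced by associated original edges), it is strongly connected, and it has the same $2$-edge-connected blocks as $G'$, hence as $G$. I claim $C(G')=G'$. Suppose not, and pick $e\in E(G')\setminus E(C(G'))$. Then $C(G')\subseteq G'\setminus e\subseteq G'$. Since passing to a supergraph can only coarsen the partition into $2$-edge-connected blocks (an edge-disjoint pair of paths in a subgraph is still edge-disjoint in the supergraph) and passing to a subgraph can only refine it, the graph $G'\setminus e$, being squeezed between $C(G')$ and $G'$ which share the same blocks, has exactly those blocks; and $G'\setminus e$ is strongly connected because it contains $C(G')$. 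Thus $G'\setminus e$ is a feasible \textsf{2EC-B} solution strictly inside $G'$, contradicting minimality. Hence $E(G')=E(C(G'))$, and the edge count established in the proof of Theorem~\ref{theorem:ApproximationRatio} gives $|E(G')|\le 4(n+n')$, where $n'$ is the number of vertices lying in nontrivial $2$-edge-connected blocks — a quantity determined solely by the blocks of $G$, hence the same whether it is read off $G$ or $G'$. As every optimal \textsf{2EC-B} solution has at least $n+n'$ edges (again from that proof), the approximation ratio $4$ follows.

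For \textsf{Test2ECB-BC} applied to the condensed graph $H$, the same argument applies to the portion of the output that lives on $H$. That portion, call it $H'$, is a minimal strongly connected spanning subgraph of $H$ preserving the $2$-edge-connected blocks of $H$, so applying the \textsf{IST-B} construction to $H'$ returns $H'$ itself by the sandwiching argument above (valid verbatim for the multigraph $H$); hence $|E(H')|\le 4(N+N')$, with $N=|V(H)|$ and $N'$ the number of supervertices in nontrivial blocks of $H$. Adding the $k$ edges used inside the $2$-edge-connected components, where $k\le 2k^{\ast}$ and $k^{\ast}$ is the optimal number of such edges (exactly as in the proof of Theorem~\ref{theorem:ApproximationRatio}), the expanded solution has at most $4(N+N')+2k^{\ast}<4(N+N'+k^{\ast})$ edges, while any \textsf{2EC-B-C} solution needs at least $N+N'+k^{\ast}$ edges; the ratio $4$ follows.

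The step I expect to demand the most care is the sandwiching claim — verifying precisely that a spanning subgraph trapped between $C(G')$ and $G'$ has exactly the $2$-edge-connected blocks common to both — together with the closely related point that the notion of minimality guaranteed for the output of \textsf{Test2ECB-B} (resp.\ \textsf{Test2ECB-BC} on $H$) is genuinely the ``no edge is removable at all'' notion and not merely an artifact of the processing order (one checks that an edge kept because it is a strong bridge, or because its removal splits some block, remains non-removable as further edges are deleted, since cuts and strong bridges are monotone under edge deletion). Once these two points are pinned down, the corollary is just the bookkeeping already performed in Theorem~\ref{theorem:ApproximationRatio}, now read off the solution rather than off the input.
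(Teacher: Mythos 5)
Your proposal is correct and follows essentially the same route as the paper: run \textsf{IST-B} on the output $G'$ of \textsf{Test2ECB-B}, observe that by minimality of $G'$ the resulting certificate must equal $G'$ itself, and then read off the $4(n+n')$ edge bound (and its condensed-graph analogue) from the proof of Theorem~\ref{theorem:ApproximationRatio}. The only difference is that you make explicit the sandwiching argument behind ``minimality forces $G'=C(G')$,'' which the paper leaves implicit; that is a welcome clarification, not a divergence.
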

\begin{proof}
We consider first algorithm \textsf{Test2ECB-B} for the \textsf{2EC-B} problem.
Let $G$ be a strongly connected digraph with $n$ vertices, and let $n'$ be the number of vertices in nontrivial blocks (i.e., $2$-edge-connected blocks of size at least $2$).
Let $G'$ be the spanning subgraph of $G$ produced by running \textsf{Test2ECB-B} on $G$. It suffices to argue that $G'$ contains less than $4(n+n')$ edges.
Suppose that we run \textsf{IST-B} on $G'$. Let $G''$ be the resulting subgraph of $G'$. Then, $G''$ is also a solution to \textsf{2EC-B} for $G$, and by the proof of Theorem \ref{theorem:ApproximationRatio} it has at most $4(n+n')$ edges. But since $G'$ is a minimal solution to \textsf{2EC-B} for $G$, we must have $G'=G''$.

For the \textsf{2EC-B-C} problem, assume that the edge-disjoint spanning trees construction produces $k$ edges.
Then $k \le 2 k^{\ast}$, where $k^{\ast}$ is number of edges in an optimal solution.
Let $H$ be the condensed graph of $G$, and let $N$ be the number of its vertices.
Let $H'$ be the spanning subgraph of $H$ produced by running \textsf{Test2ECB-BC} on $H$. By the proof of Theorem \ref{theorem:ApproximationRatio} and the same argument as for the \textsf{2EC-B} problem, we have that $H'$  contains at most $4(N+N')$ edges, where $N'$ is the total number of vertices in nontrivial blocks of $H$.
So the corresponding expanded graph has at most $4(N+N')+k < 4(N+N'+k^{\ast})$ edges.
Since the smallest \textsf{2EC-B-C} solution has at least $N+N'+k^{\ast}$ edges, the 4-approximation follows.
\end{proof}

\subsection{Implementation details}
\label{section:IST-implementation}

Here we provide some implementation details for our algorithms.
In order to obtain a more efficient implementation of the \textsf{IST} algorithms that achieve better quality ratio in practice,
we try to reuse as many edges as possible when we build the spanning trees in the three phases of the algorithm.
In the third phase of the construction we need to solve the smallest \textsf{SCSS} problem for each subgraph $H_S$ induced by a strongly connected component $S$ in the second-level auxiliary
graphs after the deletion of a strong bridge.
To that end, we apply a modified version of the linear-time $5/3$-approximation algorithm of Zhao et al.~\cite{ZNI:MSCS:2003}.
This algorithm computes a  \textsf{SCSS} of a strongly connected digraph by performing a depth-first search (DFS) traversal.
During the DFS traversal, any cycle that is detected is contracted into a single vertex.
We modify this approach so that we can avoid inserting new edges into the sparse certificate as follows.
Since we only care about the ordinary vertices in $S$, we can construct a subgraph of $S$ that contains edges already added in $C(G)$.
We compute the strongly connected components of this subgraph and contract them. Then we apply the algorithm of Zhao et al. on the contracted graph of $S$.
Furthermore, during the DFS traversal we give priority to edges already added in $C(G)$.

We can apply a similar idea in the second phase of the construction as well.
The algorithm of \cite{domv:gt05} for computing two independent spanning trees of a flow graph uses the edges of a DFS spanning tree, together with at most $n-1$ other edges.
Hence, we can modify the DFS traversal so that we give priority to edges already added in $C(G)$.

\subsection{Heuristics applied on auxiliary graphs}
\label{section:auxiliary}

To speed up algorithms from the \textsf{Test2EDP} and
\textsf{Hybrid} families, we applied them
 to the first-level and second-level auxiliary graphs. 
Since auxiliary graphs are supposed to be smaller than the original graph, one could expect to obtain some performance gain at the price of a slightly worse approximation.
However, this performance gain cannot be taken completely for granted, as auxiliary vertices and shortcut edges 
may be repeated in several auxiliary graphs.
Our experiments indicated that applying this heuristic to second-level auxiliary graphs yields better results than the ones obtained on first-level auxiliary graphs.
We refer to those variants as
\begin{itemize}
\item \textsf{Test2EDP-B-Aux} and \textsf{Hybrid-B-Aux},
\item \textsf{Test2EDP-BC-Aux} and \textsf{Hybrid-BC-Aux},
\end{itemize}
depending on the algorithm (\textsf{Test2EDP} or \textsf{Hybrid}) and problem (\textsf{2EC-B} or \textsf{2EC-B-C}) considered.

\subsection{Trivial edges}
\label{sec:add-speed-up}

For the algorithms of the \textsf{Test2EDP} and \textsf{Hybrid} families
we use an additional speed-up heuristic in order to avoid testing edges that trivially
belong to the computed solution. We say that $(x,y)$ is a  \emph{trivial edge} of the current graph $G'$ if
it satisfies one of the following conditions:
\begin{itemize}
\item $x$ belongs to a $2$-edge-connected block of size at least two (nontrivial block) and has outdegree at most two,
    or $y$ belongs to a $2$-edge-connected block of size at least two (nontrivial block) and has indegree at most two;
\item $x$ belongs to a $2$-edge-connected block of size one (trivial block) and has outdegree one,
    or $y$ belongs to a $2$-edge-connected block of size one (trivial block) and has indegree one.
\end{itemize}
Clearly, the removal of a trivial edge will result in a digraph that either has different $2$-edge-connected blocks
or is not strongly connected.
Therefore these edges should remain in $G'$.
As we show later in our experiments, such a simple test can yield significant performance gains.

\section{Experimental analysis}
\label{sec:experimental}

We implemented the algorithms 
previously described: $7$ for \textsf{2EC-B}, $6$  for \textsf{2EC-B-C}, and one for \textsf{2EC-C}, as summarized in Table \ref{tab:algorithms}.
All implementations were written in {\tt C++} and compiled with {\tt g++ v.4.4.7} with 
flag {\tt -O3}.
We performed our experiments on a GNU/Linux machine, with Red Hat Enterprise Server v6.6: a PowerEdge T420 server 64-bit NUMA
with two Intel Xeon E5-2430 v2 processors  and 16GB of RAM RDIMM memory. Each processor has 6 cores sharing a 15MB L3 cache, and each core has a 2MB private L2 cache and 2.50GHz speed. In our experiments we did not use any parallelization, and each algorithm ran on a single core.
We report CPU times measured with the \texttt{getrusage} function. All our running times 
were averaged over ten different runs.

\begin{table}[!ht]
\begin{adjustbox}{width=\textwidth,totalheight=\textheight,keepaspectratio}
\begin{small}
\tabulinesep=1.0mm
\setlength{\tabcolsep}{4.0pt} 
\begin{tabu}[t]{p{2.78cm}p{1.32cm}p{8.2cm}p{1.50cm}} 
\hline
  Algorithm & Problem & Technique & Time \\
 \hline
 \textsf{ZNI-C}  & \textsf{2EC-C}  & Zhao et al.~\cite{ZNI:MSCS:2003} applied on the  condensed graph  & $O(m+n)^{\dag}$  \\
 \hline
  \textsf{IST-B original}  & \textsf{2EC-B}  & Original sparse certificate from \cite{2ECB}  & $O(m+n)$  \\
  \textsf{IST-B}                   & \textsf{2EC-B} &  Modified sparse certificate & $O(m+n)$  \\
  \textsf{Test2EDP-B}             & \textsf{2EC-B} & Two edge-disjoint paths test on sparse certificate of input graph & $O(n^2)$ \\
  \textsf{Test2ECB-B}             & \textsf{2EC-B} & $2$-edge-connected blocks test on sparse certificate of input graph & $O(n^2)$ \\
  \textsf{Hybrid-B}                   & \textsf{2EC-B} & Hybrid of two edge-disjoint paths and $2$-edge-connected blocks test on sparse certificate of input graph & $O(n^2)$ \\
   \textsf{Test2EDP-B-Aux}             & \textsf{2EC-B} & \textsf{Test2EDP-B} applied on second-level auxiliary graphs & $O(n^2)$ \\
   \textsf{Hybrid-B-Aux}                   & \textsf{2EC-B} & \textsf{Hybrid-B} applied on second-level auxiliary graphs & $O(n^2)$ \\
  \hline
  \textsf{IST-BC}                   & \textsf{2EC-B-C} &  Modified sparse certificate preserving $2$-edge-connected components (applied on condensed graph) & $O(m+n)^{\dag}$  \\
  \textsf{Test2EDP-BC}             & \textsf{2EC-B-C} & Two edge-disjoint paths test on sparse certificate of condensed graph & $O(n^2)$ \\
   \textsf{Test2ECB-BC}             & \textsf{2EC-B-C} & $2$-edge-connected blocks test on sparse certificate of condensed graph & $O(n^2)$ \\
  \textsf{Hybrid-BC}                   & \textsf{2EC-B-C} & Hybrid of two edge-disjoint paths and $2$-edge-connected blocks test on sparse certificate of condensed graph & $O(n^2)$ \\
   \textsf{Test2EDP-BC-Aux}             & \textsf{2EC-B-C} & \textsf{Test2EDP-BC} applied on second-level auxiliary graphs & $O(n^2)$ \\
   \textsf{Hybrid-BC-Aux}                   & \textsf{2EC-B-C} & \textsf{Hybrid-BC} applied on second-level auxiliary graphs & $O(n^2)$ \\
  \hline
  \end{tabu}
\end{small}
\end{adjustbox}
\caption{The algorithms considered in our experimental study. The
worst-case bounds refer to a digraph with $n$ vertices and $m$
edges. $^{\dag}$These linear running times assume that the $2$-edge-connected components of the input digraph are available.}
\label{tab:algorithms}
\end{table}
\begin{table}[!ht]
\setlength{\tabcolsep}{4.5pt}
\begin{adjustbox}{width=\textwidth,totalheight=\textheight,keepaspectratio}
\begin{small}
\begin{tabular}{l|rrrrrrr|l}
\hline
Dataset & $n$ & $m$ & file size & $\delta_{avg}$  & $b^{\ast}$ & $\delta_{avg}^{\mathit{B}}$ & $\delta_{avg}^{\mathit{C}}$ &  type  \\
\hline
Rome99           & 3353       & 8859       & 100KB              & 2.64                    & 1474                    & 1.75                              & 1.67                                                  & road network                      \\
P2p-Gnutella25   & 5153       & 17695      & 203KB              & 3.43                    & 2181                    & 1.60                              & 1.00                                                  & peer2peer                         \\
P2p-Gnutella31   & 14149      & 50916      & 621KB              & 3.59                    & 6673                    & 1.56                              & 1.00                                                  & peer2peer                         \\
Web-NotreDame    & 53968      & 296228     & 3,9MB              & 5.48                    & 34879                   & 1.50                              & 1.36                                                  & web graph                         \\
Soc-Epinions1    & 32223      & 443506     & 5,3MB              & 13.76                   & 20975                   & 1.56                              & 1.55                                                  & social network                    \\
USA-road-NY      & 264346     & 733846     & 11MB               & 2.77                    & 104618                  & 1.80                              & 1.80                                                  & road network                      \\
USA-road-BAY     & 321270     & 800172     & 12MB               & 2.49                    & 196474                  & 1.69                              & 1.69                                                  & road network                      \\
USA-road-COL     & 435666     & 1057066    & 16MB               & 2.42                    & 276602                  & 1.68                              & 1.68                                                  & road network                      \\
Amazon0302       & 241761     & 1131217    & 16MB               & 4.67                    & 73361                   & 1.74                              & 1.64                                                  & prod. co-purchase                 \\
WikiTalk         & 111881     & 1477893    & 18MB               & 13.20                   & 85503                   & 1.45                              & 1.44                                                  & social network                    \\
Web-Stanford     & 150532     & 1576314    & 22MB               & 10.47                   & 64723                   & 1.62                              & 1.33                                                  & web graph                         \\
Amazon0601       & 395234     & 3301092    & 49MB               & 8.35                    & 83995                   & 1.82                              & 1.82                                                  & prod. co-purchase                 \\
Web-Google       & 434818     & 3419124    & 50MB               & 7.86                    & 211544                  & 1.59                              & 1.48                                                  & web graph                         \\
Web-Berkstan     & 334857     & 4523232    & 68MB               & 13.50                   & 164779                  & 1.56                              & 1.39                                           & web graph\\
\hline \end{tabular}

\end{small}
\end{adjustbox}
\caption{Real-world graphs sorted by file size of their largest
SCC; $n$ is the number of vertices, $m$ the number of edges, and
$\delta_{avg}$ is the average vertex indegree; $b^{\ast}$ is the
number of strong bridges; $\delta_{avg}^{B}$ and
$\delta_{avg}^{C}$ are lower bounds on the average vertex indegree
of an optimal solution to \textsf{2EC-B} and \textsf{2EC-C},
respectively.}\label{tab:datasets}
\end{table}

For the experimental evaluation we use the datasets shown in
Table~\ref{tab:datasets}. We measure the quality of the solution
computed by algorithm $A$ on problem ${\cal P}$ by a \emph{quality
ratio} defined as $q(A,{\cal P}) =
\delta_{\mathit{avg}}^{A}/\delta_{\mathit{avg}}^{{\cal P}}$, where
$\delta_{\mathit{avg}}^{A}$ is the average vertex indegree of the spanning
subgraph computed by $A$ and $\delta_{\mathit{avg}}^{{\cal P}}$ is
a lower bound on the average vertex indegree of the optimal
solution for ${\cal P}$.
Specifically, for \textsf{2EC-B} and
\textsf{2EC-B-C} we define
$\delta_{\mathit{avg}}^{B} = (n+k)/n$, where $n$ is the total number of vertices of the input digraph and $k$ is the number of vertices that belong in nontrivial $2$-edge-connected blocks~\footnote{This follows from the fact that in the sparse subgraph the $k$ vertices in nontrivial blocks must have indegree at least two, while the remaining $n-k$ vertices must have indegree at least one, since we seek for a strongly connected spanning subgraph.}.
%
We set a similar lower bound $\delta_{\mathit{avg}}^{C}$ for \textsf{2EC-C}, with the only difference that $k$ is the number of vertices that belong in nontrivial $2$-edge-connected components.
Note that the quality ratio is an upper bound of the actual approximation ratio of the specific input. 
The smaller the values of  $q(A,{\cal P})$ (i.e., the closer to 1), the better is the approximation obtained by algorithm $A$ for problem ${\cal P}$.

\subsection{Experimental results}
\label{sec:results}
We now report the results of our experiments with all the algorithms
considered for problems \textsf{2EC-B}, \textsf{2EC-B-C}  and \textsf{2EC-C}. 
As previously mentioned, for the sake of efficiency, all variants of \textsf{Test2EDP}, \textsf{Test2ECB} and \textsf{Hybrid}
were run on the sparse certificate computed by either \textsf{IST-B} or \textsf{IST-BC}
(depending on the problem at hand) instead of the original digraph.

We group the experimental results into two categories: results on the \textsf{2EC-B} problem and results on both \textsf{2EC-C} and \textsf{2EC-B-C} problems.
In all cases we are interested in the quality ratio of the computed solutions and the corresponding running times.
Moreover, in order to better highlight the different behaviour of our algorithms, we present for each algorithm both the quality ratio for each individual input and also give
an overall view in terms of box-and-whisker diagrams.
Specifically, we report the following experimental results:
\begin{itemize}
\item For the \textsf{2EC-B} problem:
\begin{itemize}
\item the quality ratio of the spanning  subgraphs computed by the different algorithms is shown in Table~\ref{tab:2ECB}, Figure~\ref{fig:plottedquality} (top), and Figure~\ref{fig:box-quality} (top);
\item their running times are given in Table~\ref{tab:2EC-B-times}, while the corresponding plotted values are shown in Figure~\ref{fig:2ecb-running-time} (top).
\end{itemize}
\item For the \textsf{2EC-C} and \textsf{2EC-B-C} problems:
\begin{itemize}
\item the quality ratio of the spanning  subgraphs computed by the different algorithms is shown in Table~\ref{tab:2ECBC}, Figure~\ref{fig:plottedquality} (bottom), and Figure~\ref{fig:box-quality} (bottom);
\item their running times are given in Table~\ref{tab:2EC-B-C-times}, while the corresponding plotted values are shown in Figure~\ref{fig:2ecb-running-time} (bottom). 
We note that the running times include the time to compute the $2$-edge-connected components of the input digraph. 
To that end, we use the algorithm from \cite{2ECB-Exp}, which is fast in practice despite the fact that its worst-case running time is $O(mn)$.
\end{itemize}
\end{itemize}


\begin{table}[!ht]
\setlength{\tabcolsep}{2.9pt}
\begin{adjustbox}{width=\textwidth,totalheight=\textheight,keepaspectratio}
\begin{small}
\begin{tabular}{l|ccccccc}
\multicolumn{1}{l|}{\multirow{2}{*}{\textsf{Dataset}}} & \multicolumn{1}{c}{\textsf{IST-B}} & \multicolumn{1}{c}{\multirow{2}{*}{\ \textsf{IST-B}}\ } & \multicolumn{1}{c}{\multirow{2}{*}{\ \textsf{Test2EDP-B}}\ }  & \multicolumn{1}{c}{\textsf{Test2ECB-B}} & \multicolumn{1}{l}{\multirow{2}{*}{\ \textsf{Test2EDP-B-Aux}}\ } & \multicolumn{1}{l}{\multirow{2}{*}{\ \textsf{Hybrid-B-Aux}}\ } \\
 & \multicolumn{1}{c}{original} & & & \multicolumn{1}{c}{\textsf{\& Hybrid-B}} & & \\
\hline
Rome99                              & 1.389                              & 1.363                                         & 1.171                                 & 1.167                                        & 1.177                                     & 1.174      \\
P2p-Gnutella25                      & 1.656                              & 1.512                                         & 1.220                                 & 1.143                                        & 1.251                                     & 1.234      \\
P2p-Gnutella31                      & 1.682                              & 1.541                                         & 1.251                                 & 1.169                                        & 1.291                                     & 1.274      \\
Web-NotreDame                       & 1.964                              & 1.807                                         & 1.489                                 & 1.417                                        & 1.500                                     & 1.471      \\
Soc-Epinions1                       & 2.047                              & 1.837                                         & 1.435                                 & 1.379                                        & 1.441                                     & 1.406      \\
USA-road-NY                         & 1.343                              & 1.245                                         & 1.174                                 & 1.174                                        & 1.175                                     & 1.175      \\
USA-road-BAY                        & 1.361                              & 1.307                                         & 1.245                                 & 1.246                                        & 1.246                                     & 1.246      \\
USA-road-COL                        & 1.354                              & 1.304                                         & 1.251                                 & 1.252                                        & 1.252                                     & 1.252      \\
Amazon0302                          & 1.762                              & 1.570                                         & 1.186                                 & 1.134                                        & 1.206                                     & 1.196      \\
WikiTalk                            & 2.181                              & 2.050                                         & 1.788                                 & 1.588                                        & 1.792                                     & 1.615      \\
Web-Stanford                        & 1.907                              & 1.688                                         & 1.409                                 & 1.365                                        & 1.418                                     & 1.406      \\
Amazon0601                          & 1.866                              & 1.649                                         & 1.163                                 & 1.146                                        & 1.170                                     & 1.166      \\
Web-Google                          & 1.921                              & 1.728                                         & 1.389                                 & 1.322                                        & 1.401                                     & 1.377          \\
Web-Berkstan                        & 2.048                              & 1.775                                         & 1.480                                 & 1.427                                        & 1.489                                     & 1.469         \\
\hline
\end{tabular} 
\end{small}
\end{adjustbox}
\caption{Quality ratio $q(A,{\cal P})$ of the solutions computed for \textsf{2EC-B}.}\label{tab:2ECB}

\vspace*{0.3in}

\setlength{\tabcolsep}{2.1pt}
\begin{adjustbox}{width=\textwidth,totalheight=\textheight,keepaspectratio}
\begin{small}
\begin{tabular}{l|c|cccccc}
	\multicolumn{1}{l|}{\multirow{2}{*}{\textsf{Dataset}}} & \multicolumn{1}{c|}{\multirow{2}{*}{\textsf{ZNI-C}}} & \multicolumn{1}{c}{\multirow{2}{*}{\textsf{IST-BC}}} & \multicolumn{1}{c}{\multirow{2}{*}{\textsf{Test2EDP-BC}}} & \multicolumn{1}{c}{\multirow{1}{*}{\textsf{Test2ECB-BC}}}& \multicolumn{1}{l}{\textsf{\multirow{2}{*}{Test2EDP-BC-Aux}}} & \multicolumn{1}{l}{\textsf{\multirow{2}{*}{Hybrid-BC-Aux}}} \\  & & & &\multicolumn{1}{c}{\textsf{\& Hybrid-BC}} & & \\\hline
Rome99                              & 1.360                                               & 1.371                                                  & 1.197                                  & 1.187                                  & 1.197                                      & 1.195                                      \\
P2p-Gnutella25                      & 1.276                                               & 1.517                                                  & 1.218                                  & 1.141                                  & 1.249                                      & 1.232                                      \\
P2p-Gnutella31                      & 1.312                                               & 1.537                                                  & 1.251                                  & 1.170                                  & 1.290                                      & 1.273                                      \\
Web-NotreDame                       & 1.620                                               & 1.747                                                  & 1.500                                  & 1.426                                  & 1.510                                      & 1.484                                      \\
Soc-Epinions1                       & 1.790                                               & 1.847                                                  & 1.488                                  & 1.435                                  & 1.489                                      & 1.476                                      \\
USA-road-NY                         & 1.343                                               & 1.341                                                  & 1.163                                  & 1.163                                  & 1.163                                      & 1.163                                      \\
USA-road-BAY                        & 1.360                                               & 1.357                                                  & 1.237                                  & 1.237                                  & 1.237                                      & 1.237                                      \\
USA-road-COL                        & 1.343                                               & 1.339                                                  & 1.242                                  & 1.242                                  & 1.242                                      & 1.242                                      \\
Amazon0302                          & 1.464                                               & 1.580                                                  & 1.279                                  & 1.228                                  & 1.292                                      & 1.284                                      \\
WikiTalk                            & 1.891                                               & 2.099                                                  & 1.837                                  & 1.630                                  & 1.838                                      & 1.827                                      \\
Web-Stanford                        & 1.560                                               & 1.679                                                  & 1.430                                  & 1.390                                  & 1.436                                      & 1.427                                      \\
Amazon0601                          & 1.709                                               & 1.727                                                  & 1.200                                  & 1.186                                  & 1.202                                      & 1.200                                      \\
Web-Google                          & 1.637                                               & 1.728                                                  & 1.437                                  & 1.381                                  & 1.446                                      & 1.431                                      \\
Web-Berkstan                        & 1.637                                               & 1.753                                                  & 1.516                                  & 1.472                                  & 1.523                                      & 1.511  \\
	\hline
\end{tabular} 
\end{small}
\end{adjustbox}
\caption{Quality ratio $q(A,{\cal P})$ of the solutions computed for \textsf{2EC-C} and \textsf{2EC-B-C}.}\label{tab:2ECBC}
\end{table}


\begin{figure}[!ht]
\centering
\includegraphics[width=0.9\textwidth]{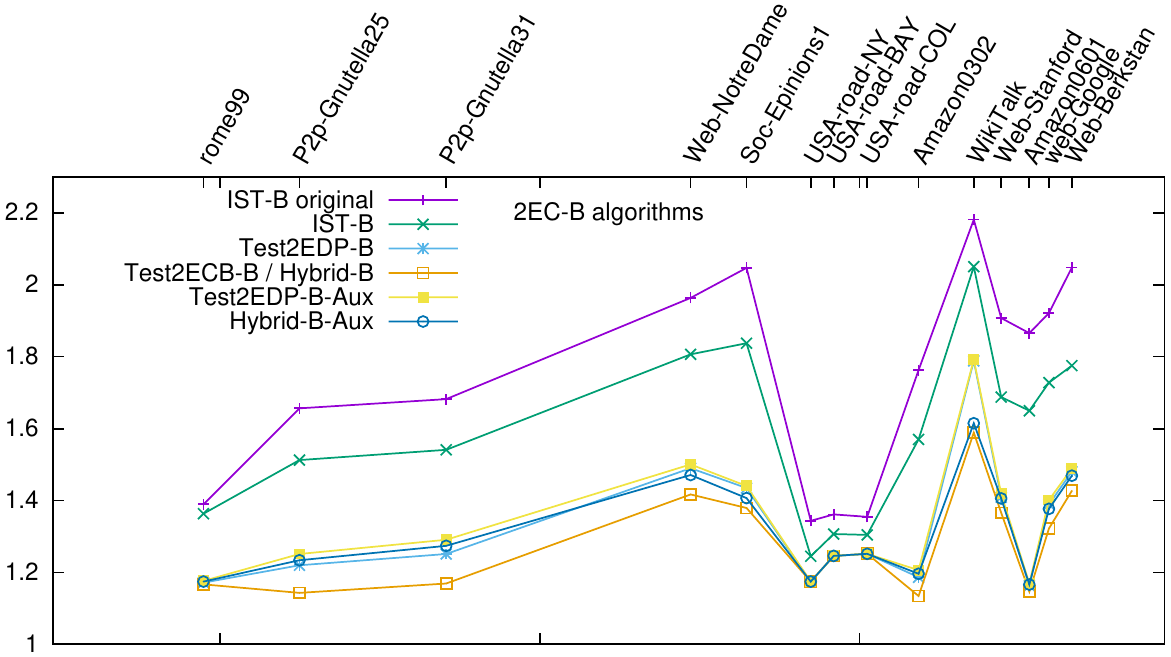}%

\vspace*{0.1in}

\includegraphics[width=0.9\textwidth]{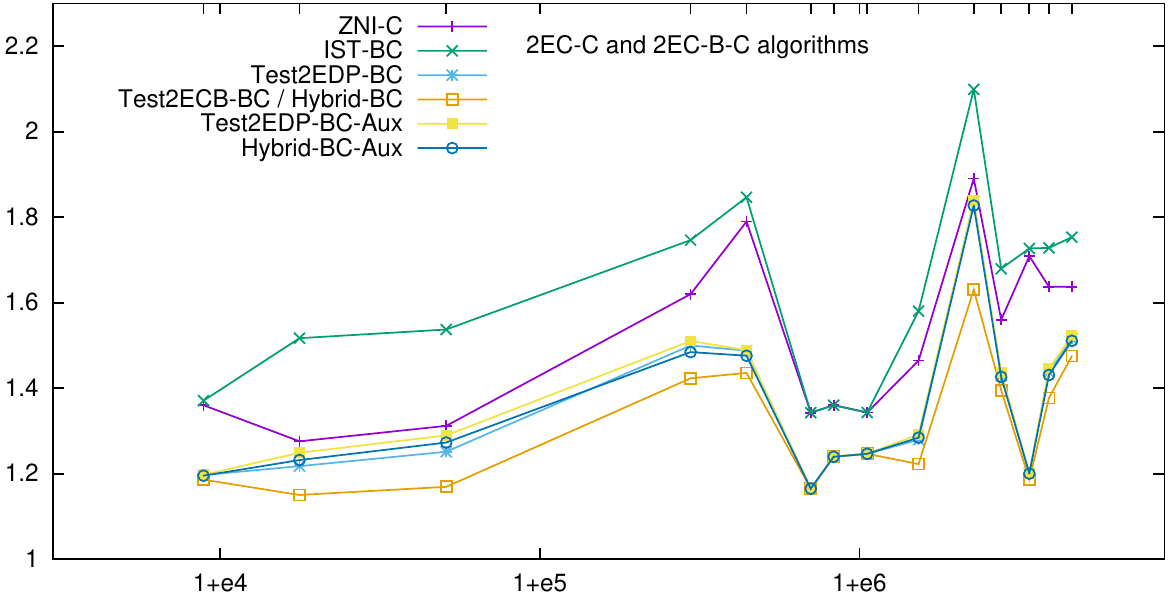}%
\caption{The plotted quality ratios taken from Tables~\ref{tab:2ECB} and \ref{tab:2ECBC}, respectively.
\label{fig:plottedquality}}
\end{figure}


\begin{figure}[!ht]
\centering
\includegraphics[width=0.8\textwidth]{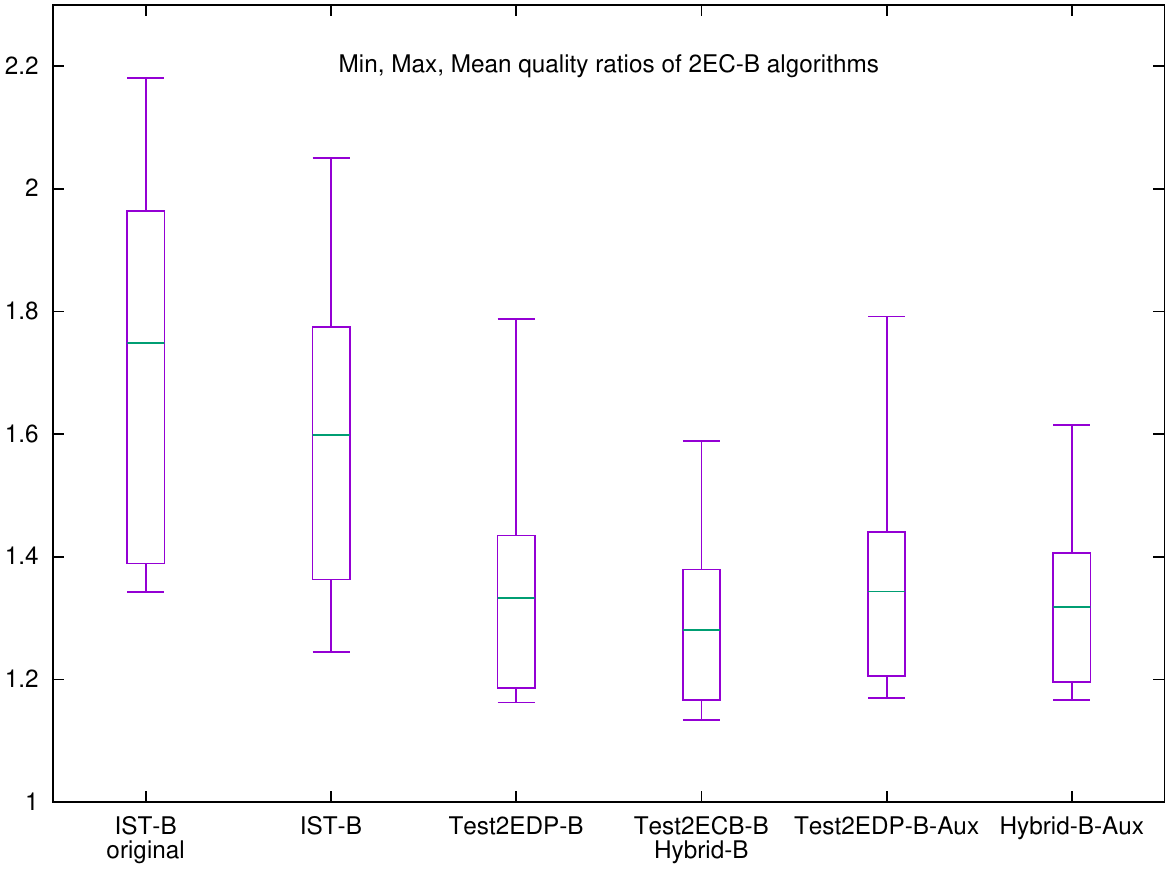}%

\vspace*{0.1in}

\includegraphics[width=0.8\textwidth]{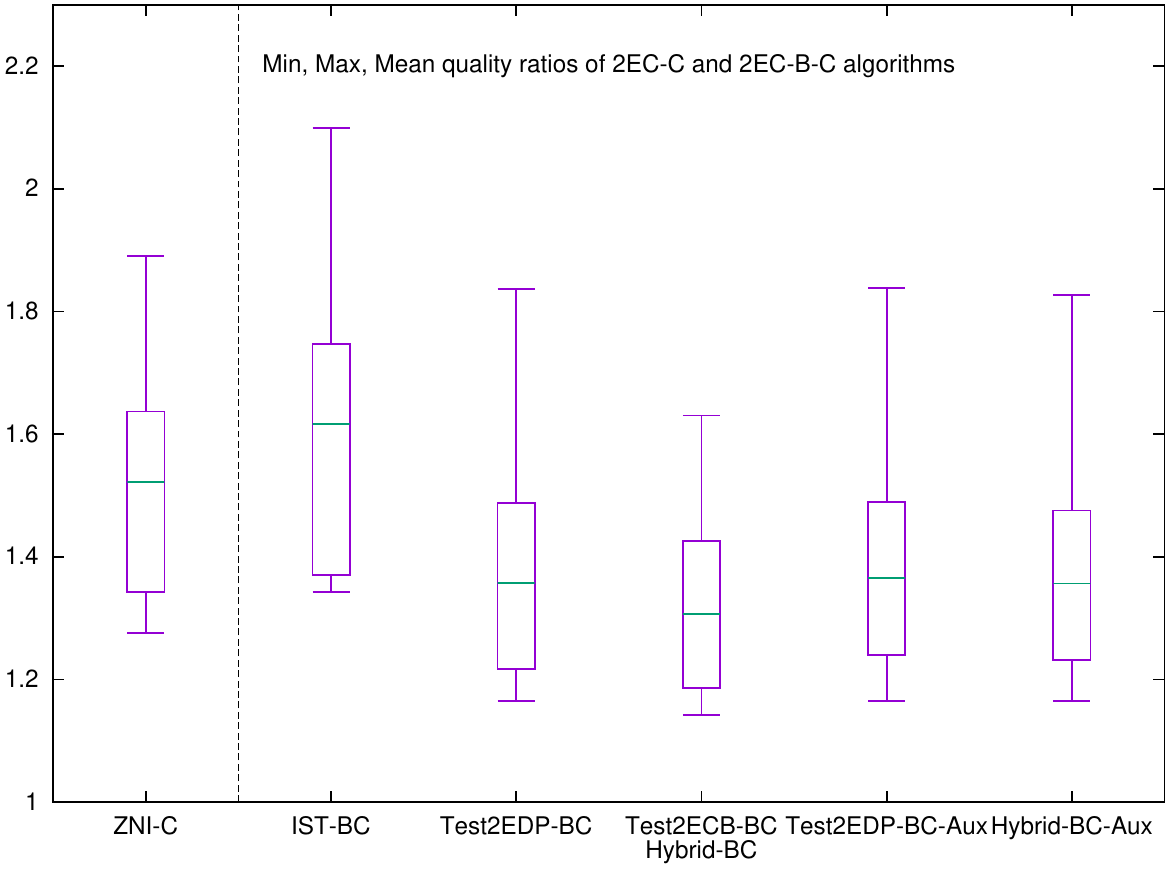}%
\caption{The quality ratios in terms of box-and-whisker diagrams. The range of each box is obtained from half of the datasets.
\label{fig:box-quality}}
\end{figure}


\begin{table}[!ht]
\setlength{\tabcolsep}{3.3pt}
\begin{adjustbox}{width=\textwidth,totalheight=\textheight,keepaspectratio}
\begin{scriptsize}
\begin{tabular}{l|d{1.3}d{1.3}d{4.3}d{6.3}d{5.3}d{3.3}d{5.3}}
\multicolumn{1}{l|}{\multirow{2}{*}{\textsf{Dataset}}} &
\multicolumn{1}{c}{\textsf{IST-B}} &
\multicolumn{1}{c}{\multirow{2}{*}{\ \textsf{IST-B}}\ } &
\multicolumn{1}{c}{\multirow{2}{*}{\ \textsf{Test2EDP-B}}\ }  &
\multicolumn{1}{c}{\multirow{2}{*}{\textsf{Test2ECB-B}}} &
\multicolumn{1}{c}{\multirow{2}{*}{\textsf{Hybrid-B}}} &
\multicolumn{1}{l}{\multirow{2}{*}{\ \textsf{Test2EDP-B-Aux}}\ } &
\multicolumn{1}{l}{\multirow{2}{*}{\ \textsf{Hybrid-B-Aux}}\ } \\
 & \multicolumn{1}{c}{original} & & & & & \\
\hline
rome99         & 0.008 & 0.010 & 0.160    & 14.297      & 0.224     & 0.056   & 0.183     \\
P2p-Gnutella25 & 0.017 & 0.019 & 0.848    & 44.377      & 3.767     & 0.295   & 2.595     \\
P2p-Gnutella31 & 0.052 & 0.064 & 6.716    & 352.871     & 31.935    & 2.467   & 20.923    \\
Web-NotreDame  & 0.211 & 0.281 & 46.937   & 4723.904    & 352.834   & 3.192   & 215.492   \\
Soc-Epinions1  & 0.194 & 0.224 & 47.869   & 2073.662    & 135.098   & 16.387  & 234.066   \\
USA-road-NY    & 0.648 & 0.788 & 750.874  & 81990.402   & 206.055   & 110.616 & 108.463   \\
USA-road-BAY   & 0.979 & 1.212 & 1002.689 & 132171.251  & 475.378   & 186.816 & 187.277   \\
USA-road-COL   & 1.333 & 1.681 & 1794.103 & 231785.495  & 976.019   & 217.215 & 214.586   \\
Amazon0302     & 1.068 & 1.253 & 1398.438 & 164047.057  & 8499.349  & 331.569 & 3985.706  \\
WikiTalk       & 0.763 & 0.918 & 637.879  & 28339.485   & 5057.806  & 91.674  & 10877.771 \\
Web-Stanford   & 0.908 & 1.309 & 607.356  & 49532.517   & 2120.636  & 25.184  & 952.585   \\
Amazon0601     & 2.406 & 2.698 & 4847.592 & 446475.698  & 8408.463  & 968.964 & 8382.981  \\
web-Google     & 3.362 & 3.898 & 4801.787 & 612329.017  & 38031.588 & 422.058 & 25899.907 \\
Web-Berkstan   & 1.829 & 3.841 & 2180.488 & 212587.201  & 10805.487 & 96.372  & 5641.406 \\
\hline
\end{tabular} 
\end{scriptsize}
\end{adjustbox}
\caption{Running times in seconds of the algorithms for the \textsf{2EC-B} problem.\label{tab:2EC-B-times}}

\vspace*{0.3in}

\setlength{\tabcolsep}{3.3pt}
\begin{adjustbox}{width=\textwidth,totalheight=\textheight,keepaspectratio}
\begin{scriptsize}
\begin{tabular}{l|d{1.3}|d{2.3}d{4.3}d{6.3}d{5.3}d{4.3}d{5.3}}
\multicolumn{1}{l|}{{\textsf{Dataset}}} &
\multicolumn{1}{c|}{{\textsf{ZNI-C}}} &
\multicolumn{1}{c}{{\textsf{IST-BC}}} &
\multicolumn{1}{c}{{\textsf{Test2EDP-BC}}} &
\multicolumn{1}{c}{{\textsf{Test2ECB-BC}}}&
\multicolumn{1}{c}{{\textsf{Hybrid-BC}}} &
\multicolumn{1}{l}{\textsf{Test2EDP-BC-Aux}} &
\multicolumn{1}{l}{\textsf{Hybrid-BC-Aux}} \\
\hline
rome99         & 0.012 & 0.019  & 0.051    & 1.013      & 0.126     & 0.054    & 0.154     \\
P2p-Gnutella25 & 0.010 & 0.029  & 0.855    & 77.274     & 3.727     & 0.320    & 2.574     \\
P2p-Gnutella31 & 0.025 & 0.090  & 6.438    & 664.936    & 31.348    & 2.495    & 20.644    \\
Web-NotreDame  & 0.159 & 0.448  & 11.062   & 2635.104   & 267.482   & 2.036    & 165.532   \\
Soc-Epinions1  & 0.177 & 0.442  & 10.778   & 203.688    & 61.531    & 10.022   & 36.404    \\
USA-road-NY    & 0.339 & 2.000  & 208.987  & 244.003    & 214.563   & 209.334  & 209.309   \\
USA-road-BAY   & 0.437 & 4.539  & 151.786  & 289.465    & 178.197   & 152.488  & 152.407   \\
USA-road-COL   & 0.547 & 5.275  & 198.795  & 526.362    & 305.525   & 199.768  & 199.711   \\
Amazon0302     & 1.687 & 3.671  & 237.584  & 38201.229  & 3184.360  & 148.871  & 1909.122  \\
WikiTalk       & 0.923 & 6.182  & 131.766  & 3538.042   & 2620.733  & 66.261   & 407.962   \\
Web-Stanford   & 1.290 & 2.499  & 226.669  & 50153.480  & 1250.210  & 20.134   & 636.641   \\
Amazon0601     & 4.768 & 7.659  & 1732.197 & 13067.429  & 2791.030  & 1725.333 & 2390.567  \\
web-Google     & 6.275 & 18.988 & 892.954  & 204990.718 & 15783.304 & 345.384  & 11714.605 \\
Web-Berkstan   & 1.911 & 9.744  & 456.082  & 186129.463 & 5792.903  & 70.600   & 2552.911 \\
\hline
\end{tabular} 
\end{scriptsize}
\end{adjustbox}
\caption{Running times in seconds of the algorithms for the \textsf{2EC-C} and \textsf{2EC-B-C} problems.\label{tab:2EC-B-C-times}}
\end{table}


\begin{figure}[!ht]
\centering
\includegraphics[width=0.9\textwidth]{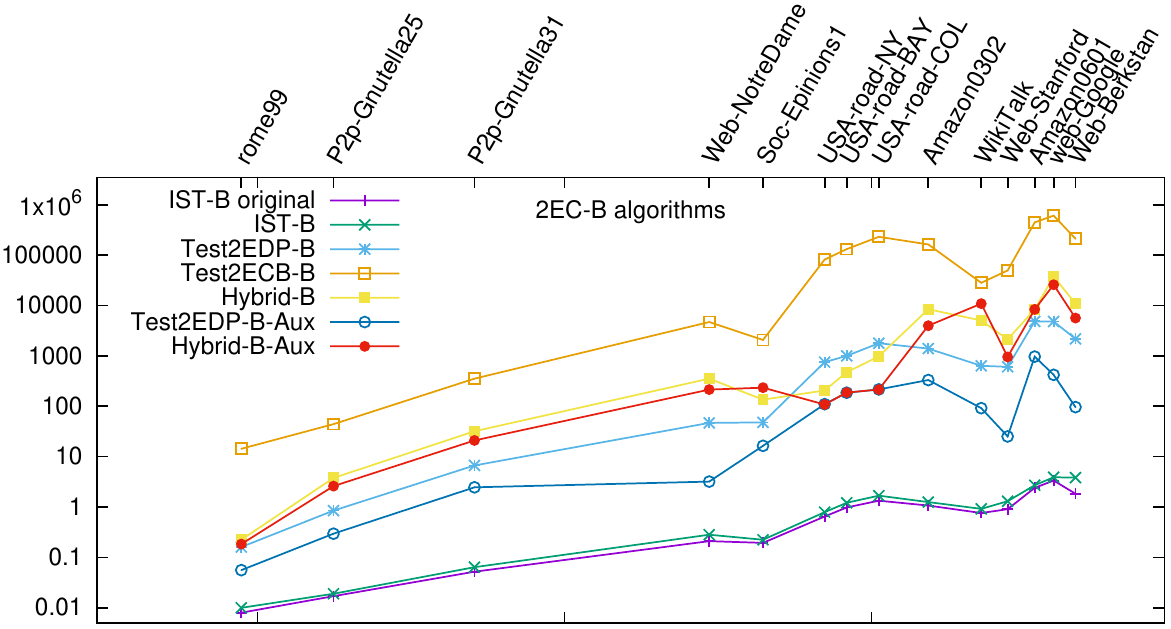}%

\vspace*{0.1in}

\includegraphics[width=0.9\textwidth]{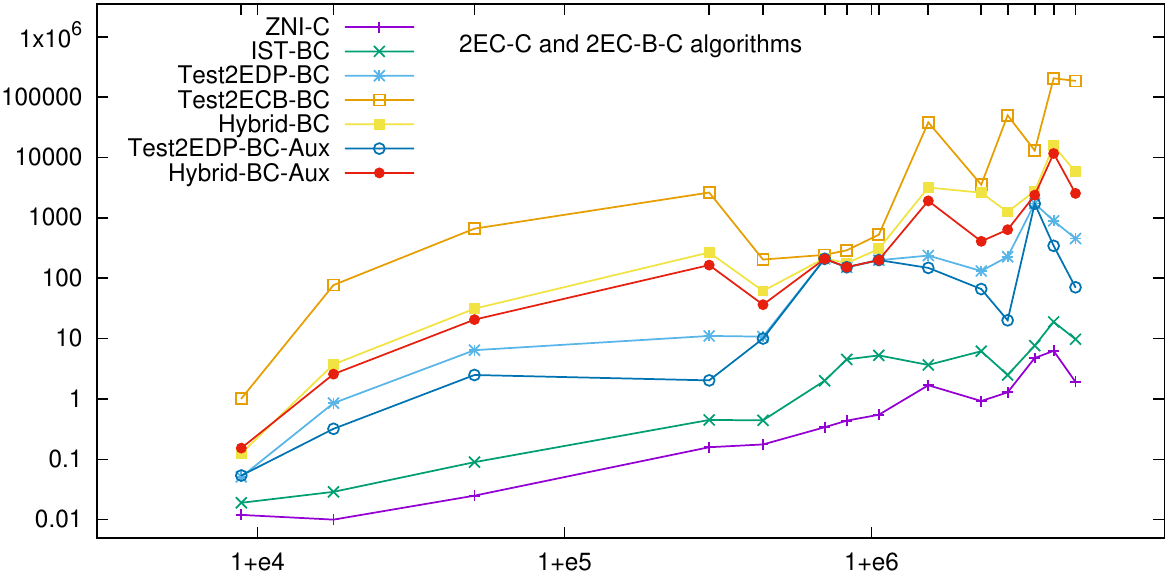}%
\caption{Running times in seconds with respect to the number of edges (in log-log scale).
}\label{fig:2ecb-running-time}
\end{figure}

\subsection{Evaluation of the experimental results}
\label{sec:discussion}
There are two peculiarities related to road networks that emerge immediately from the analysis of our experimental data.
First, all algorithms achieve consistently better approximations for road networks than for most of the other graphs in our data set.
Second, for the \textsf{2EC-B} problem the \textsf{Hybrid} algorithms (\textsf{Hybrid-B} and \textsf{Hybrid-B-Aux}) seem to achieve substantial speedups on road networks; for the
\textsf{2EC-B-C} problem, this is even true for
\textsf{Test2ECB-BC}.
The first phenomenon can be explained by taking into account the macroscopic structure of road networks, which is rather different from other networks.
Indeed, road networks are very close to be ``undirected": i.e., whenever there is an edge $(x,y)$, there is also the reverse edge $(y,x)$ (expect for one-way roads).
Roughly speaking, road networks mainly consist of the union of $2$-edge-connected components,  joined together by strong bridges, and their $2$-edge-connected blocks coincide with their $2$-edge-connected components.
In this setting, a sparse strongly connected subgraph of the condensed graph will preserve both blocks and components.
The second phenomenon is mainly due to the \emph{trivial edge} heuristic described in Section~\ref{sec:add-speed-up}.
%
%

Apart from the peculiarities of road networks, \textsf{ZNI-C} behaves as expected for \textsf{2EC-C} through its linear-time $2$-approximation algorithm.
Note that for both problems \textsf{2EC-B} and \textsf{2EC-B-C}, all algorithms achieve quality ratio significantly smaller than our theoretical bound of $4$.
Regarding running times, we observe that the \textsf{2EC-B-C} algorithms are faster than the \textsf{2EC-B} algorithms, sometimes significantly, as
they take advantage of the condensed graph that seems to admit small size in real-world applications. 
In addition, our experiments highlight interesting tradeoffs between practical performance and quality of the obtained solutions.
Indeed, the fastest (\textsf{IST-B} and \textsf{IST-B original} for problem \textsf{2EC-B};
\textsf{IST-BC} for \textsf{2EC-B-C})
and the slowest algorithms (\textsf{Test2ECB-B} and \textsf{Hybrid-B} for \textsf{2EC-B};
\textsf{Test2ECB-BC} and \textsf{Hybrid-BC} for \textsf{2EC-B-C}) tend to produce respectively the worst and the best approximations.
Note that
\textsf{IST-B} improves the quality of the solution of \textsf{IST-B original} at the price of slightly higher running times, while \textsf{Hybrid-B} (resp., \textsf{Hybrid-BC}) produces the same solutions as \textsf{Test2ECB-B} (resp., \textsf{Test2ECB-BC}) with rather impressive speedups. Running an algorithm on the second-level auxiliary graphs seems to produce substantial performance benefits at the price of a slightly worse approximation (\textsf{Test2EDP-B-Aux}, \textsf{Hybrid-B-Aux}, \textsf{Test2EDP-BC-Aux} and \textsf{Hybrid-BC-Aux}  versus \textsf{Test2EDP-B}, \textsf{Hybrid-B}, \textsf{Test2EDP-BC} and \textsf{Hybrid-BC}).
Overall,
in our experiments \textsf{Test2EDP-B-Aux}
and \textsf{Test2EDP-BC-Aux} seem to provide
good quality solutions for the problems considered without being  penalized too much by a substantial performance degradation.

\section{Concluding remarks}
\label{sec:concluding}

We do not know if the approximation ratio of $4$ that we provided for the algorithms of the \textsf{IST} and \textsf{Hybrid} families are tight. Figure \ref{fig:smallest-minimal}(a) shows a digraph $G$ such that a sparse certificate constructible by algorithm \textsf{IST-B} has $6n+O(1)$ edges. This digraph has a single nontrivial $2$-edge-connected block consisting of the vertices $x_1, x_2, \ldots, x_k$, which also form a $2$-edge-connected component.
An optimal solution for \textsf{2EC-B} on this instance, shown in Figure \ref{fig:smallest-minimal}(b), has $2n+O(1)$ edges, where each vertex $x_i$ has indegree and outdegree equal to two, while the other four vertices have indegree and oudegree equal to one. Figure \ref{fig:smallest-minimal}(c) shows a minimal solution with $3n+O(1)$ edges, where again each vertex $x_i$ has indegree and outdegree equal to two but vertex $y$ has indegree equal to $k$ and vertex $z$ has outdegree equal to $k$; removing any edge of this minimal solution either destroys the strong connectivity of the subgraph or partitions the nontrivial block.  So, for this instance \textsf{IST-B} achieves a $3$-approximation, while \textsf{Hybrid-B} achieves a $3/2$-approximation.
The three phases of the sparse certificate construction by \textsf{IST-B} are given in Figures \ref{fig:smallest-minimal-2}, \ref{fig:smallest-minimal-3} and \ref{fig:smallest-minimal-4}.

We also note that the example of Figure \ref{fig:smallest-minimal} is not a worst-case instance for \textsf{Hybrid-B}. If the input digraph is $2$-edge-connected then we seek for a smallest $2$-edge-connected spanning subgraph, and Lemma \ref{lemma:Test2EDP-Test2ECB} implies that \textsf{Hybrid-B} produces the same output as \textsf{Test2EDP-B}. So, in this case \textsf{Hybrid-B} achieves an approximation ratio of $2$, which is known to be tight~\cite{CT00}. 
In light of our experimental results, it seems possible that the \textsf{Hybrid} algorithms always achieve a $2$-approximation, but we have no proof.

\begin{figure}[t!]
\begin{center}
\includegraphics[trim={0 0 0 7cm}, clip=true, width=1.0\textwidth]{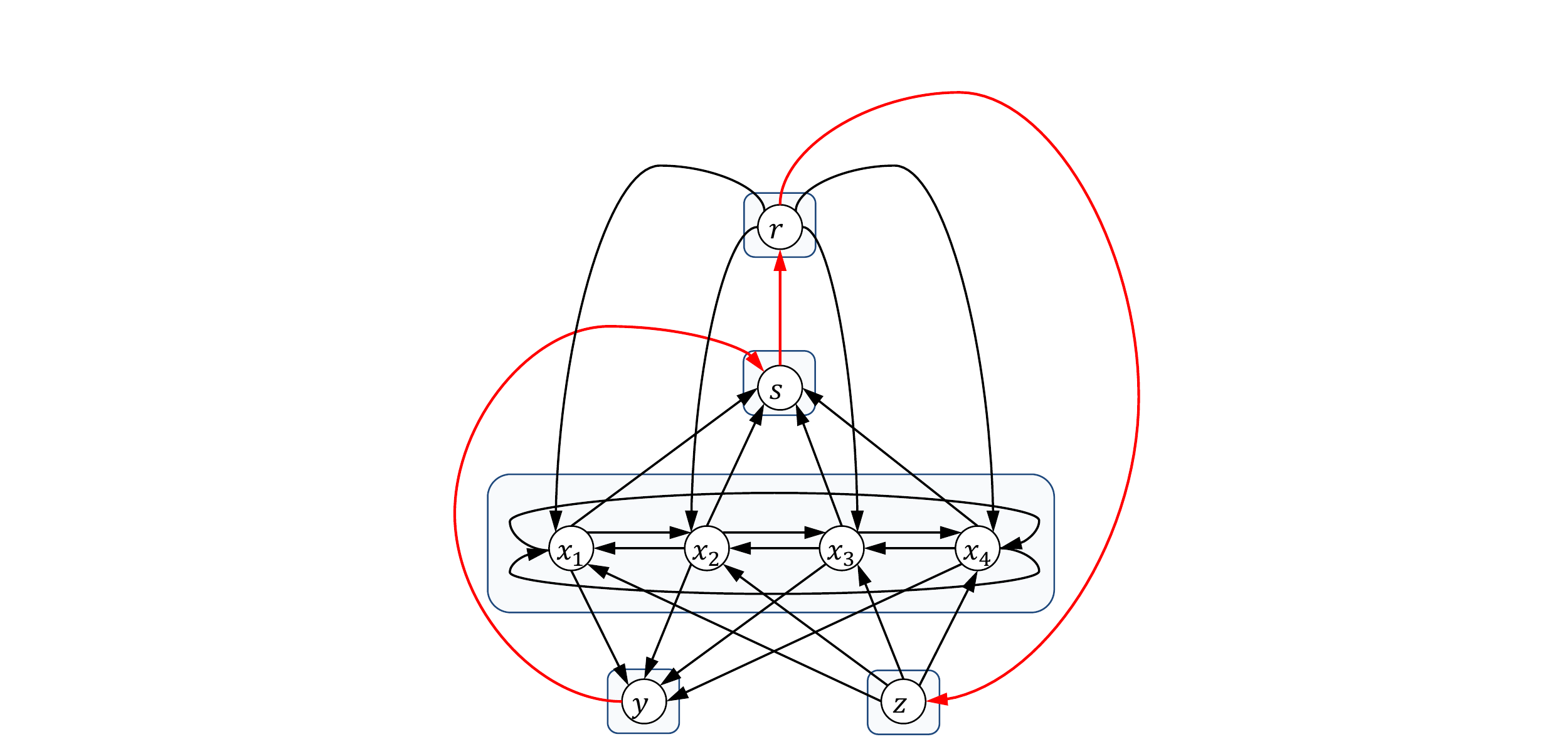}
(a)
\includegraphics[trim={0 0 0 7cm}, clip=true, width=1.0\textwidth]{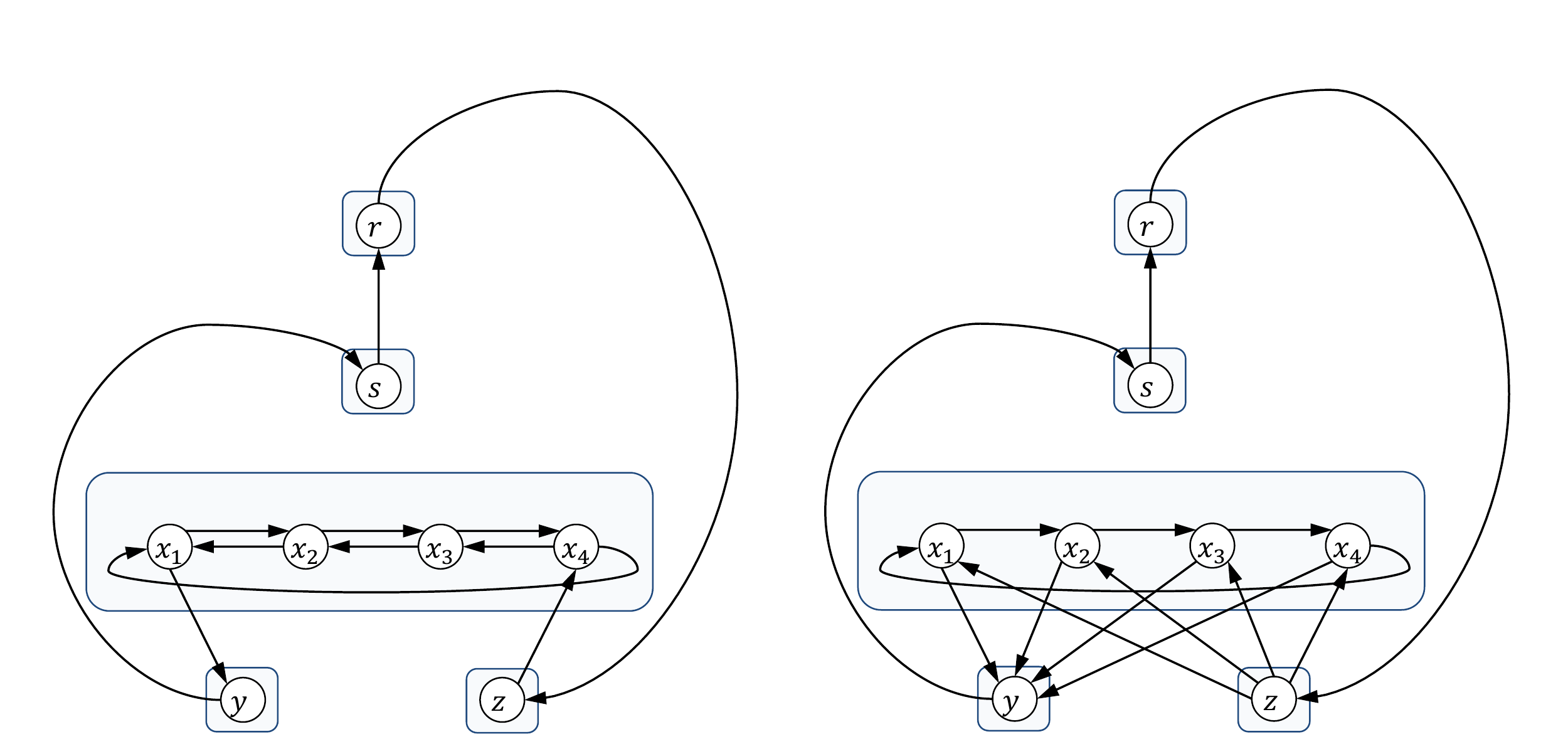}
(b) \hspace{7.5cm} (c) \hspace{2.5cm}
\end{center}
\caption{(a) A digraph $G$ with $n=k+4$ vertices and $m = 6n - 21$ edges (in this instance $k=4$). Strong bridges are shown in red (better viewed in color). Digraph $G$ has a single nontrivial  $2$-edge-connected block consisting of the vertices $x_1, x_2, \ldots, x_k$. (b) A minimum solution for the \textsf{2EC-B} problem with $2n-4$ edges. (c) A minimal solution for the \textsf{2EC-B} problem with $3n-9$ edges.}
\label{fig:smallest-minimal}
\end{figure}

\begin{figure}[h!]
\begin{center}
\includegraphics[trim={0 0 0 7cm}, clip=true, width=1.0\textwidth]{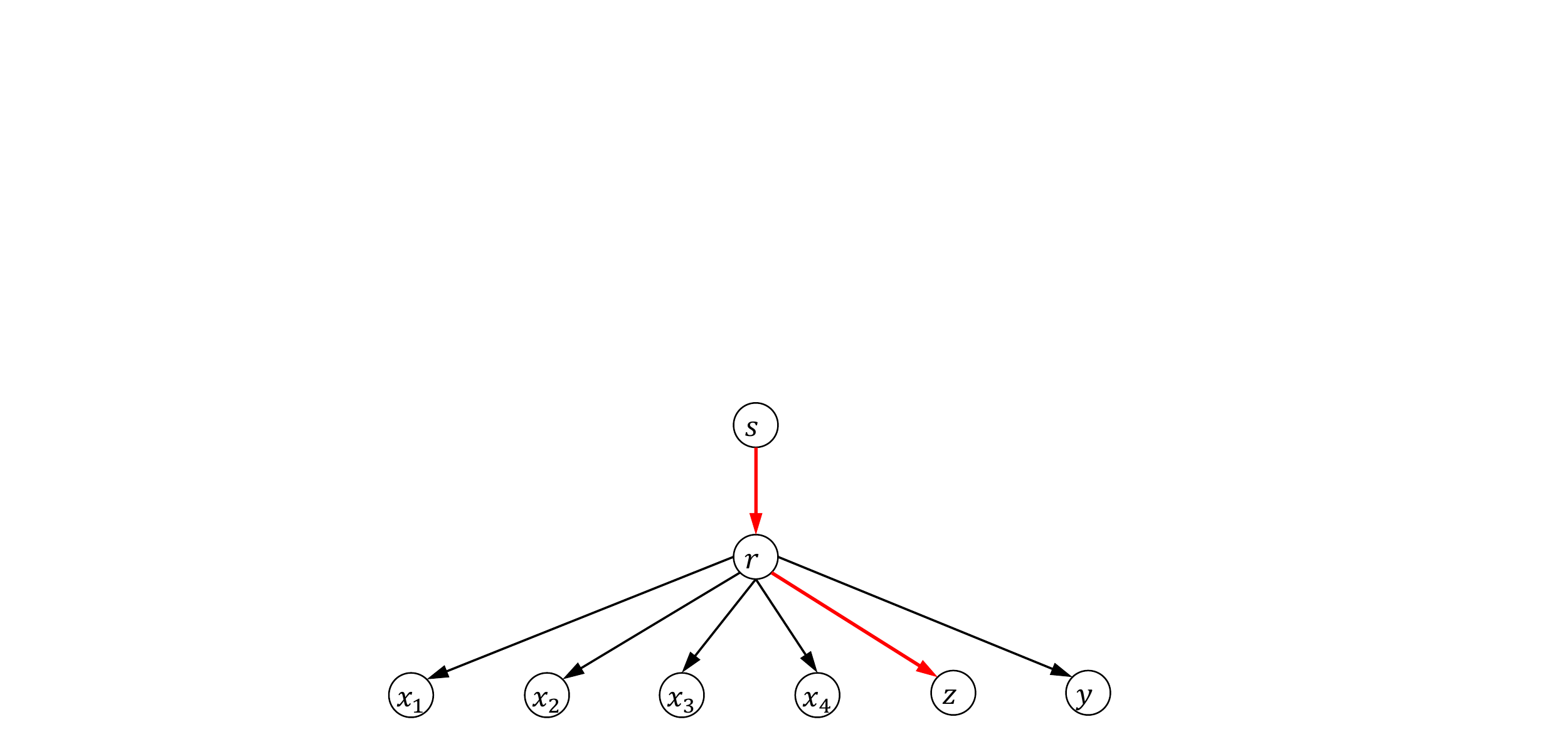}
\end{center}
\hspace{7.5cm} (a)
\begin{center}
\includegraphics[trim={0 0 0 7cm}, clip=true, width=1.0\textwidth]{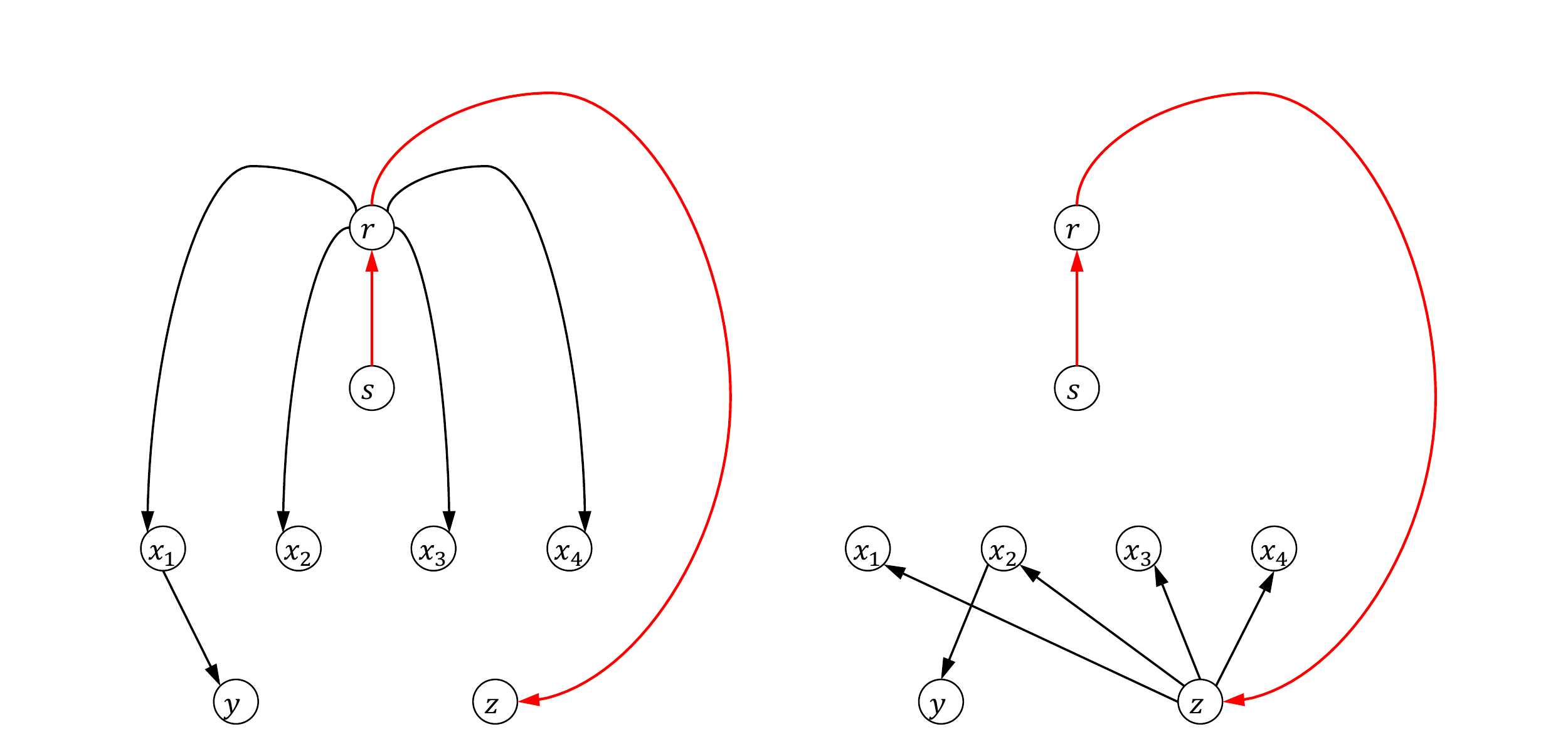}
\end{center}
\hspace{3.5cm} (b) \hspace{6.8cm} (c)
\caption{(a)  The dominator tree of the flow graph $G(s)$ that corresponds to digraph $G$ of Figure \ref{fig:smallest-minimal}. (b) and (c) Two independent spanning trees of $G(s)$ that may be selected by Phase 1 of the sparse certificate construction.}
\label{fig:smallest-minimal-2}
\end{figure}

\begin{figure}[h!]
\begin{center}
\includegraphics[trim={0 0 0 7cm}, clip=true, width=1.0\textwidth]{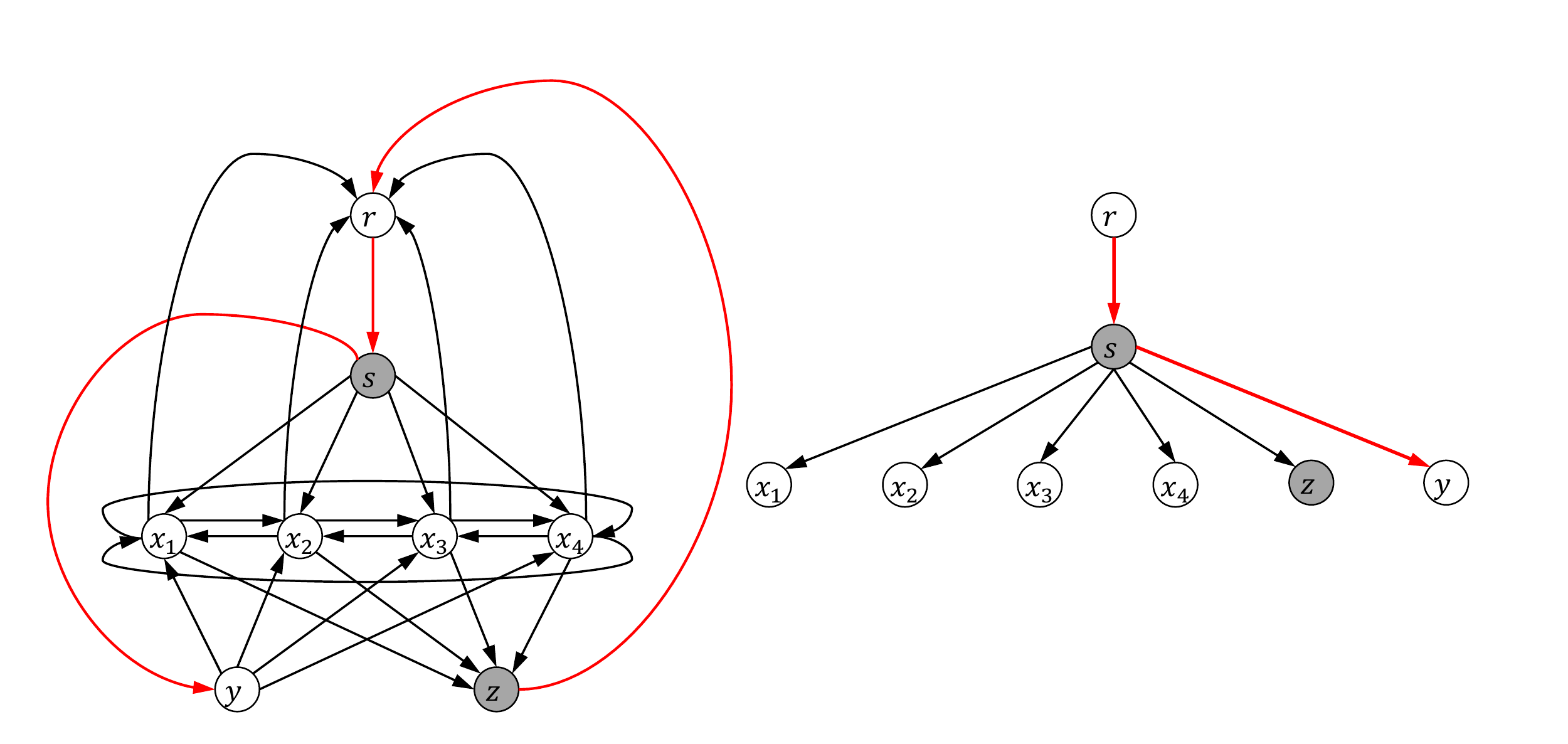}
\end{center}
\vspace{-0.2cm}
\hspace{4cm} (a) \hspace{7cm} (b)
\vspace{0.2cm}
\begin{center}
\includegraphics[trim={0 0 0 10cm}, clip=true, width=1.0\textwidth]{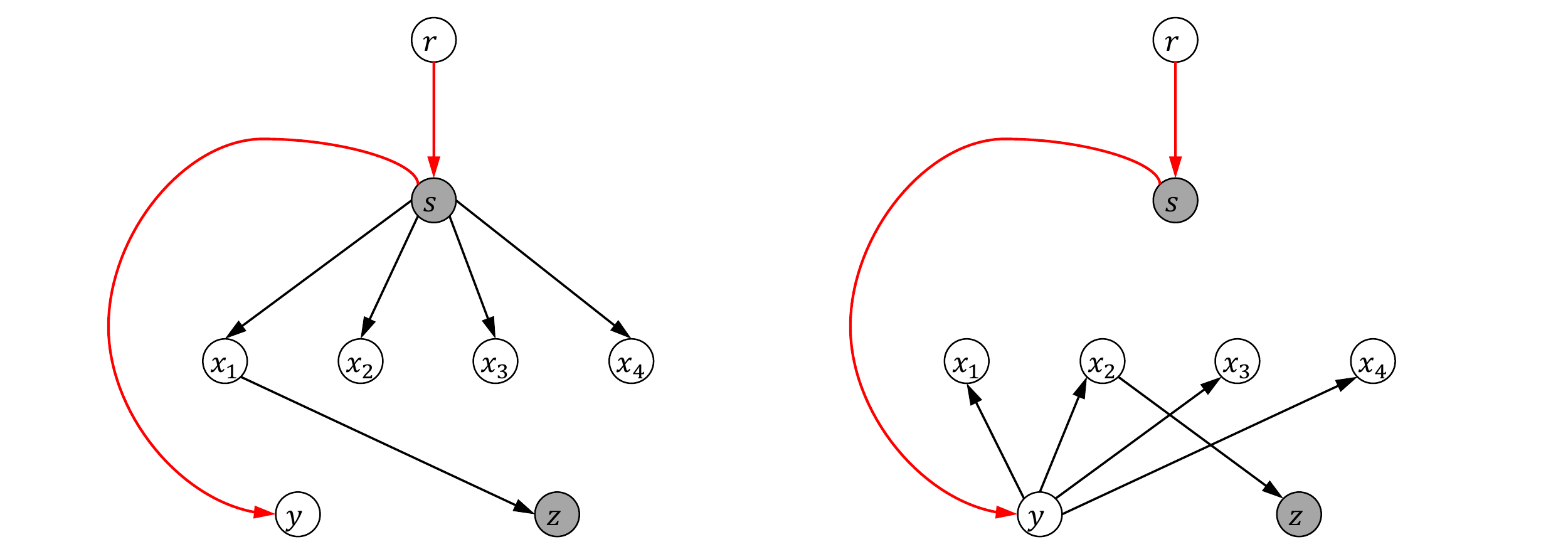}
\end{center}
\hspace{4cm} (c) \hspace{7cm} (d)
\vspace{-0.2cm}
\caption{(a) The reverse graph $H^R$ of the auxiliary graph $H=G_r$ of Figure \ref{fig:smallest-minimal-2}. Auxiliary vertices are shown grey. (b) The dominator tree of $H^R(r)$ with start vertex $r$.
 (c) and (d)  Two independent spanning trees of $H^R(r)$ that may be selected by Phase 2 of the sparse certificate construction.
}
\label{fig:smallest-minimal-3}
\end{figure}

\begin{figure}[h!]
\begin{center}
\includegraphics[trim={0 0 0 7cm}, clip=true, width=1.0\textwidth]{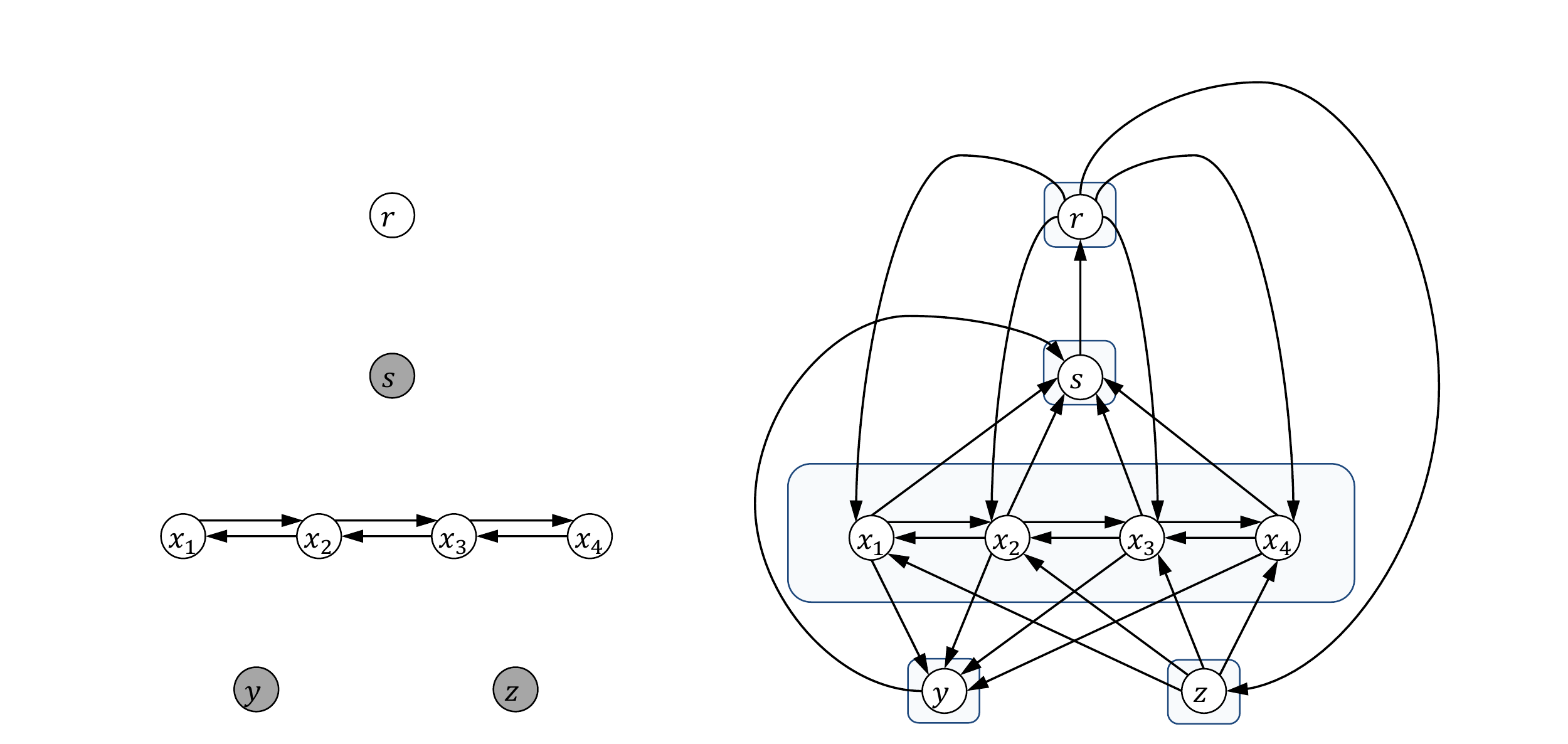}
\end{center}
\hspace{4cm} (a) \hspace{6.2cm} (b)
\caption{(a) The strongly connected components of $H^R_s \setminus (r,s)$, where $H^R_s$ is the second-level auxiliary graph of $H^R$ of Figure \ref{fig:smallest-minimal-3}. The only nontrivial component is induced by the vertices $x_1, x_2, \ldots, x_k$. The edges shown may be  selected by Phase 3 of the sparse certificate construction. (b) The final sparse certificate with $6n-23$ edges.}
\label{fig:smallest-minimal-4}
\end{figure}

We close with a couple of few more open questions and possible directions for future work.
First, we can consider the case of vertex-connectivity, where we can define the corresponding problems of computing the smallest strongly connected spanning subgraph that maintains the $2$-vertex-connected blocks, or
the $2$-vertex-connected components, or both.
A sparse certificate for the $2$-vertex-connected blocks is given in \cite{2VCB}, so it would interesting to study if based on this construction we can achieve a similar approximation ratio for the $2$-vertex-connectivity case.
Furthermore, the concept of $2$-edge-connected blocks may well be generalized to $k$-edge disjoint paths, for $k \geq 2$.
Keeping in mind that the underlying graph should remain strongly connected, it is natural to ask if computing smallest such spanning subgraph achieves a better approximation ratio for $k > 2$.
Such a phenomenon occurs in approximating the smallest spanning $k$-edge connected subgraph \cite{CT00,GGTW:kECSS:2009}.

\bibliographystyle{plain}
\bibliography{ltg}

\end{document}